\providecommand{\tabularnewline}{\\}
\newcommand{\lyxdot}{.}
\theoremstyle{plain}
\newtheorem{thm}{\protect\theoremname}
\date{}
\providecommand{\theoremname}{Theorem}
\begin{document}
\title{Statistical Inference from Partially Nominated Sets: An Application
to Estimating the Prevalence of Osteoporosis Among Adult Women}
\author{Zeinab Akbari Ghamsari, Ehsan Zamanzade\thanks{Corresponding author. e.zamanzade@sci.ui.ac.ir; ehsanzamanzadeh@yahoo.com}
 \space and Majid Asadi}

\maketitle
Department of Statistics, Faculty of Mathematics and Statistics, University
of Isfahan,

Isfahan 81746-73441, Iran
\begin{abstract}
This paper focuses on drawing statistical inference based on a novel
variant of maxima or minima nomination sampling (NS) designs. These
sampling designs are useful for obtaining more representative sample
units from the tails of the population distribution using the available
auxiliary ranking information. However, one common difficulty in performing
NS in practice is that the researcher cannot obtain a nominated sample
unless he/she uniquely determines the sample unit with the highest
or the lowest rank in each set. To overcome this problem, a variant
of NS which is called partial nomination sampling is proposed in which
the researcher is allowed to declare that two or more units are tied
in the ranks whenever he/she cannot find with high confidence the
sample unit with the highest or the lowest rank with high confidence.
Based on this sampling design, two asymptotically unbiased estimators
are developed for the cumulative distribution function, which are
obtained using maximum likelihood and moment-based approaches, and
their asymptotic normality is proved. Several numerical studies have
shown that the developed estimators have higher relative efficiencies
than their counterpart in simple random sampling in analyzing either
the upper or the lower tail of the parent distribution. The procedures
that we are developed are then implemented on a real dataset from
the Third National Health and Nutrition Examination Survey (NHANES
III) to estimate the prevalence of osteoporosis among adult women
aged 50 and over. It is shown that in some certain circumstances,
the techniques that we have developed require only one-third of the
sample size needed in SRS to achieve the desired precision. This results
in a considerable reduction of time and cost compared to the standard
SRS method.

\textbf{Keywords:} imperfect ranking; ranked set sampling; maximum
likelihood; nonparametric estimation; relative efficiency; tie information

\begin{singlespace}
\textbf{Mathematics Subject Classifications (2020):} Primary: 62D05;
Secondary: 92B15
\end{singlespace}
\end{abstract}
\thispagestyle{empty}

\noindent \newtheorem{theorem}{Theorem}
\newtheorem{remark}{Remark}

\noindent \setcounter{page}{2}

\newpage{}

\section{Introduction\label{Intro}}

In some medical applications such as survival analysis, cancer studies,
risk analysis, and prevalence studies, the researcher is interested
in drawing a statistical inference about either the lower or the upper
tail of the population distribution. For instance, in large cohort
studies on measuring bone mineral density (BMD) among elderly women
in a given population, the researcher is often interested in the lower
quantile of the population since the lower values of BMD are related
to a higher possibility of bone fracture due to some minor incidents
such as falls or sport injuries. This imposes a substantial financial
burden on the society due to the high cost of medical treatment and
the possibility of the patients\textquoteright{} lifetime disability.

In situations where the researcher is only interested in either the
upper or the lower tail of the population distribution, the usual
simple random sampling (SRS) design might not lead to a good representative
sample since enough units from the tails of the distribution may not
be included in the sample. On the other hand, when measuring the actual
value of the variable of interest is time-consuming, expensive, or
hazardous, a large enough SRS sample may not be available to guarantee
that the resulting sample captures all aspects of the given population
related to at least one of the tails of the distribution. There is
sometimes a wealth of auxiliary information about the population of
interest which cannot be translated into covariates. Nevertheless,
they can be used by some more structured sampling designs such as
nomination sampling (NS) to select a more representative sample from
the tails of population as is possible in SRS. For instance, assume
that a medical researcher is interested in obtaining the prevalence
of women with a BMD lower than a given threshold in a certain population.
Note that measuring a person's actual BMD is an expensive and time-consuming
job because it requires Dual-energy X-ray absorptiometry (DXA) technology
which is expensive and not easily accessible in developing countries.
In addition, it is inconvenient since it needs a medical expert to
segment the pertinent images manually. However, a medic can first
use his/her past experience to nominate the patients who are more
likely to have the lowest BMD values among the others for further
consideration and then the actual BMD values of the nominated sample
units can be obtained via DXA technology. As has been shown in the
literature (for example, see \citealp{Zamanzade=000026Mahi2020}),
the use of NS rather than SRS can lead to a considerable efficiency
gain at either the lower or the upper tail of the distribution. As
a result, a substantial reduction in time and cost will be achieved
for a predefined precision.

The NS technique was first suggested by \citet{Willemain-a} in his
effort to develop a more acceptable way to pay for nursing home services.
He argued that nursing home service operators would be reluctant to
accept sample-based reimbursement rate by the government unless they
were allowed to play a role in the sampling process. In particular,
the operators might be allowed to include those residents with the
highest caring costs in the sample. Hence, the cited author introduced
NS to draw a statistical inference based on such samples. This sampling
strategy is also the only practical option when historical data include
only extreme values \citep{Willemain-b}. NS has been used for estimating
the population median \citep{Willemain-b}, the cumulative distribution
function (CDF) \citep{Boyles}, constructing acceptance sampling plans
\citep{Jozani-Mirkamali1}, and the quality control charts \citep{Jozani-Mirkamali2},
quantile estimation \citep{Nourmohammadi2014} and quantile regression
\citep{Jozani-NSreg}.

NS can be regarded as a special case of ranked set sampling (RSS)
proposed by \citet{McIntyre} as an efficient method for estimating
pasture yields in Australia. RSS is applicable in situations where
the sample units in a small-sized set can be easily ranked without
referring to their actual values. These situations may occur in environmental,
ecological, biological, medical, and fisheries studies. In fact, the
sample units in RSS are obtained based on additional information from
some unquantified units in terms of judgment ranks in order to provide
a more structured sample from the population of interest. Since its
introduction, RSS has been the subject of many studies and virtually
all standard statistical problems using RSS have been addressed in
the literature. These include, among others, the estimation of the
population mean \citep{Takahasi,Ozturk2011,Frey2012,FreyMean}, the
CDF \citep{Stokes,Samawi-CDF-2001,Lutz}, the population proportion
\citep{Chen2005,Zamanzade2017,Omidvar,Frey2019,Frey2021}, perfect
ranking tests \citep{Frey2007,Frey.perfect1,Frey=000026Wang,Frey-EDF},
odds ratio \citep{Samawi-odds-2013}, logistic regression \citep{Samawi-odds-2013,Samawi-logi-2017,Samawi-sample-2018,Samawi-sample-2020},
the Youden index \citep{Samawi-StatMed-2017}, parameter estimation
\citep{Chen2018,Chen2019,He2020,He2021,Qian2021}, randomized cluster
design \citep{Ahn,Wang2016,Wang2017,Wang2020} and statistical control
quality charts \citep{Al-omari,Haq,Haq-almori1,Haqetal2}.

In the NS design, a sample unit with the highest/lowest judgment rank
in a small-sized set has to be determined uniquely by the researcher.
In this paper, a new variant of NS called \textquoteleft partial nomination
sampling (PNS) design\textquoteright{} is developed in which the researcher
is allowed to declare ties whenever he/she cannot find with high confidence
the sample unit with the highest/lowest rank. In Section \ref{Sampling design},
PNS is described in detail and its usage is shown by a hypothetical
example. In Section \ref{sec:CDF-estimation-using}, two estimators
of the CDF are developed and their asymptotic normality is proved.
In Section \ref{sec:Comparisons}, the estimators that we developed
are compared with their SRS counterpart. A real dataset is analyzed
in Section \ref{sec:Data-Analysis-using} to show the applicability
of our proposed procedure in the current paper. Some concluding remarks
and directions for future research are provided in Section \ref{sec:Concluding-Remarks}.

\section{Sampling design set up\label{Sampling design}}

Let $m$ be the set size. The NS proposed by \citet{Willemain-a}
is a two-step sampling method. In the first step, the researcher is
required to draw $n$ simple random samples (each with the size of
$m$) from an infinite population and then rank each sample with the
size of $m$ from the smallest to the largest with respect to the
variable of interest using any relatively cheap and convenient method
which does not entail the actual quantification of the sample units.
In the second step, the researcher draws one unit with the smallest/largest
rank from each set for exact measurement. Therefore, this process
can be outlined as follows:
\begin{enumerate}
\item Determine an integer value for the size of the set $\left(m\right)$.
In order to facilitate the ranking process, the value of $m$ should
be kept small (e.g. from 2 to 10).
\item Identify $n\times m$ units from the target population and partition
them randomly into $n$ sets each with the size of $m$. Then, rank
each set with the size of $m$ from the smallest to the largest. The
ranking process in this step is done without any reference to the
actual values of the set units. Therefore, it is prone to error (imperfect).
\item Select the sample unit with the smallest rank in each set with the
size of $m$ for actual measurement.
\end{enumerate}
The above sampling design is called \textquoteleft MinNS\textquoteright{}
and the sample units are denoted by $\left\{ Y_{\left[1\right]i},i=1,\ldots n\right\} ,$
where $Y_{\left[1\right]i}$ denotes the sample unit with the smallest
judgment rank in the $i$-th set. In the third step mentioned above,
the sample unit with the highest judgment rank in each set with the
size of $m$ is selected for actual measurement. In this case, the
sampling design is called \textquoteleft MaxNS\textquoteright{} and
the resulting sample units are denoted by $\left\{ Y_{\left[m\right]i},i=1,\ldots n\right\} ,$
where $Y_{\left[m\right]i}$ denotes the sample unit with the highest
judgment rank in the $i$-th set. There are two main difficulties
in performing NS in practice which are both rooted in the nature of
the ranking process since it is done without any reference to the
actual values of the sample units. The first one is called \textquoteleft imperfect
ranking\textquoteright{} which refers to situations in which the sample
unit with the lowest (highest) judgment rank in the set with the size
of $m$ may not be the same as the sample unit with the true lowest
(highest) rank in the set. Therefore, it is essential to show that
the developed procedures based on the NS design are relatively robust
for imperfect ranking as long as the quality of ranking is fairly
good. The second hardship in applying NS in real situations is about
the ties in the ranking process which pertains to situations in which
the researcher cannot determine with high confidence the sample unit
with the lowest (highest) judgment rank in the set with the size of
$m$ and therefore two or more units in the set are tied as the lowest
(highest) rank. The current solution for this problem is to ignore
the information of the ties and select one of the tied units at random
for actual quantification. In this paper, not only breaking the ties
at random in NS but also recording their information for use in the
estimation process are proposed. The outline of the proposed sampling
method is given below:
\begin{enumerate}
\item Denote by $\mathbf{I}$, the matrix of the ties information with $n$
rows and $m$ columns, where $m$ is the size of the set and $n$
is the sample size.
\item Similar to NS, identify $n\times m$ units from the target population
and randomly partition them into $n$ sets each with the size of $m$
and then rank each set from the smallest to the largest.
\item From the $i$-th set, select the sample unit with the lowest judgment
rank (for $i=1,\ldots,n$). Define $I_{i,j}=1$, if the researcher
cannot distinguish between the sample unit with the lowest judgment
rank in the $i$-th set and the sample unit with the judgment rank
$j$ in the same set. Otherwise, $I_{i,j}=0$, where $I_{i,j}$ is
the element in the $i$-th row and $j$-th column of the matrix of
the ties information $\mathbf{I}$.
\end{enumerate}
The above sampling plan is called \textquoteleft MinPNS\textquoteright{}
and the sample units include not only the measured values of $\left\{ Y_{\left[1\right]i},i=1,\ldots n\right\} ,$
but also the matrix of the ties information

\noindent$\left(\mathbf{I}=\left\{ I_{i,j},i=1,\ldots,n;j=1,\ldots,m\right\} \right).$
Since in MinPNS, the sample unit with the minimum judgment rank is
always tied to itself, we have $I_{i,1}=1$ for $i=1,\ldots,n$, and
therefore $\sum_{j=1}^{m}I_{i,j}\geq1$. Clearly, MinPNS is reduced
to MinNS if and only if $\sum_{j=1}^{m}I_{i,j}=1$. If in the third
step of the above sampling plan, the researcher is concerned with
the sample unit with the highest rank in each set, then the sampling
design is called \textquoteleft MaxPNS\textquoteright . Note that
in the MaxPNS design, we also have $\sum_{j=1}^{m}I_{i,j}\geq1$ because
the sample unit with the maximum rank is always tied to itself and
also MaxPNS is equivalent to MaxNS if and only if $\sum_{j=1}^{m}I_{i,j}=1$.
It is important to note that the sample units in the PNS plan are
independent but not identically distributed. Let $m_{i}=\sum_{j=1}^{m}I_{i,j},$
for $i=1,\ldots,n$ and $n_{r}=\sum_{i=1}^{n}\mathbb{I}\left(m_{i}=r\right),$
for $r=1,\ldots,m$, where $\mathbb{I}\left(.\right)$ is the indicator
function. Then, in the MinPNS (MaxPNS) design, $n_{r}$ is the number
of the measured units in PNS which are tied to $r$ units with the
smallest (largest) ranks in the set. It is clear that $n=\sum_{r=1}^{m}n_{r}$.
With this definition, the measured sample units which fall in the
$r$-th stratum are independent and identically distributed.

The notions of perfect and imperfect rankings in PNS need to be slightly
adjusted. In the MinPNS (MaxPNS) design, the ranking is called perfect
if all the tied units in each set with the size of $m$ are smaller
(larger) than the other units in the set. Note that in the MinPNS
(MaxPNS) design, in the case of perfect ranking, the smallest (largest)
tied units are not ranked and one of them is selected at random for
actual measurement. \textquoteleft Imperfect ranking\textquoteright{}
in MinPNS (MaxPNS) refers to a situation in which at least one of
the tied units is larger (smaller) than the other untied units in
the set with the size of $m$. Clearly, if there are no ties in the
ranking process, these definitions of perfect and imperfect rankings
in PNS will coincide with their counterparts in NS.

In what follows, the MinPNS plan is illustrated by providing a hypothetical
example. Suppose that a medical researcher is interested in drawing
a statistical inference about the lower tail of the distribution of
the patients\textquoteright{} BMD values in a given population. Since
the exact measurement of BMD values is much harder than ranking them
in a small-sized set, MinPNS seems to be an appealing alternative
to the usual SRS. To demonstrate a MinPNS design with the size of
$n=10$ and the set size of $m=5,$ $m\times n=50$ patients are randomly
selected from the population and divided into $n=10$ sets each with
the size of $m=5.$ In each set, the patients are judgmentally ranked
according to their BMD values using any inexpensive method such as
the patient's check-up documents, determining their body mass index
(BMI), or the researcher's personal experience. Then, the patient
whose BMD is most likely the lowest is selected for actual quantification
using DXA technology. Whenever the researcher cannot determine with
high confidence the patient with the lowest BMD in a set, he/she is
allowed to declare as many ties as needed and to select one of the
tied patients at random for actual quantification. An example of MinPNS
is shown in detail in Table \ref{table1} where each row corresponds
to one set and the sample units in each set are listed according to
the judgment rank of their BMD values after the possible ties are
broken at random. Let $u_{i,j}$ be the patient with the $j$-th judgment
rank in the $i$-th set with the size of $m=5$ ($i\in\left\lbrace 1,\ldots,10\right\rbrace $
and $j\in\left\lbrace 1,\ldots,5\right\rbrace $). In Table \ref{table1},
the tied units are denoted by ampersands and the selected unit for
actual quantification is shown in bold face type.

\begin{table}
\centering %
\begin{tabular}{cccc}
\hline 
Set & Ranked units & Selected Unit for measurement & Actual BMD value ($Y$)\tabularnewline
\hline 
1 & $\boldsymbol{u_{1,1}},u_{1,2},u_{1,3},u_{1,4},u_{1,5}$ & $u_{1,1}$ & $0.884$\tabularnewline
2 & $u_{2,1}\&\boldsymbol{u_{2,2}},u_{2,3},u_{2,4},u_{2,5}$ & $u_{2,2}$ & $0.610$\tabularnewline
3 & $\boldsymbol{u_{3,1}},u_{3,2},u_{3,3}\&u_{3,4},u_{3,5}$ & $u_{3,1}$ & $0.753$\tabularnewline
4 & $u_{4,1}\&\boldsymbol{u_{4,2}}\&u_{4,3}\&u_{4,4}\&u_{4,5}$ & $u_{4,2}$ & $0.616$\tabularnewline
5 & $u_{5,1}\&u_{5,2}\&\boldsymbol{u_{5,3}}\&u_{5,4},u_{5,5}$ & $u_{5,3}$ & $0.690$\tabularnewline
6 & $\boldsymbol{u_{6,1}},u_{6,2},u_{6,3},u_{6,4},u_{6,5}$ & $u_{6,1}$ & $0.542$\tabularnewline
7 & $u_{7,1}\&u_{7,2}\&\boldsymbol{u_{7,3}},u_{7,4},u_{7,5}$ & $u_{7,3}$ & $0.576$\tabularnewline
8 & $u_{8,1}\&u_{8,2}\&\boldsymbol{u_{8,3}},u_{8,4},u_{8,5}$ & $u_{8,3}$ & $0.698$\tabularnewline
9 & $\boldsymbol{u_{9,1}},u_{9,2},u_{9,3}\&u_{9,4}\&u_{9,5}$ & $u_{9,1}$ & $0.769$\tabularnewline
10 & $u_{10,1}\&u_{10,2}\&\boldsymbol{u_{10,3}},u_{10,4},u_{10,5}$ & $u_{10,3}$ & $0.670$\tabularnewline
\hline 
\end{tabular}\caption{\label{table1}{\small{}An example of MinPNS with $n=10$ and $m=5$.
In each set, the sample units are listed according to the judgment
rank of their actual BMD values}}
\end{table}

One can observe in Table \ref{table1} that in the first set, the
researcher is able to determine with high confidence the sample unit
with the lowest rank and therefore no tie is declared. In the second
set, the researcher cannot determine which of the two sample units
has the lowest rank. Therefore, he/she selects one of them at random
and records the ties information. Although there is a tie in the third
set, it is not recorded because it is not related to the sample unit
with the lowest rank and is therefore irrelevant. In the fourth set,
the researcher is not able to have any judgment ranking. Therefore,
one patient is randomly selected from the set. Other sample units
are obtained in a similar fashion. Thus, the information matrix of
the ties is given by:
\[
\mathbf{I}=\begin{bmatrix}1 & 0 & 0 & 0 & 0\\
1 & 1 & 0 & 0 & 0\\
1 & 0 & 0 & 0 & 0\\
1 & 1 & 1 & 1 & 1\\
1 & 1 & 1 & 1 & 0\\
1 & 0 & 0 & 0 & 0\\
1 & 1 & 1 & 0 & 0\\
1 & 1 & 1 & 0 & 0\\
1 & 0 & 0 & 0 & 0\\
1 & 1 & 1 & 0 & 0
\end{bmatrix}
\]

In Table \ref{table1} and the information matrix of the ties $\left(\boldsymbol{\mathbf{I}}\right)$,
we see that, $u_{1,1},u_{3,1},u_{6,1}$ and $u_{9,1}$ fall in the
first stratum, $u_{2,2}$ falls in the second stratum, $u_{7,3},u_{8,3}$
and $u_{10,3}$ fall in the third stratum, and $u_{5,3}$ and $u_{4,2}$,
respectively fall in the fourth and fifth strata. Therefore, the vector
of $\mathbf{n}=\left(n_{1},\ldots,n_{5}\right)$ is given by $\mathbf{n}=\left(4,1,3,1,1\right)$.

\section{CDF estimation using PNS\label{sec:CDF-estimation-using}}

In this section, two CDF estimators are developed using the PNS design
and their asymptotic performance is studied. For brevity, only the
results for MinPNS are presented. The results for MaxPNS can be obtained
in a similar fashion. Let $\left\{ Y_{\left[1\right]i},i=1,\ldots n\right\} ,$
and $\mathbf{I}=\left\{ I_{i,j},i=1,\ldots,n;j=1,\ldots,m\right\} $
be the sample units and the information matrix of the ties, respectively,
obtained using the MinPNS design from a population with the CDF of
$F$. Let $Y_{\left[r\right]}^{j}$ be the $j$-th measured unit in
the $r$-th stratum (for $j=1,\ldots,n_{r};r=1,\ldots m$). Note that,
alternatively, the sample units from the MinPNS design can be represented
as $\left\{ Y_{\left[r\right]}^{j},j=1,\ldots,n_{r};r=1,\ldots m\right\} $.
Here, the square brackets $\left[.\right]$ are used to show that
the ranking might be imperfect and thus prone to error. If there is
no error in ranking (perfect ranking), then square brackets $\left[.\right]$
are replaced by round ones $\left(.\right)$ and the MinPNS sample
units are denoted by $\left\{ Y_{\left(r\right)}^{j},j=1,\ldots,n_{r};r=1,\ldots m\right\} $.
In what follows, two different approaches for estimating the CDF of
$F$ based on moment and maximum likelihood (ML) techniques are described
one by one.

\subsection{Moment-based (MB) approach}

The well-known naive (empirical) estimator of the CDF is given by:

\begin{equation}
F_{n}\left(t\right)=\frac{1}{n}\sum_{i=1}^{n}\mathbb{I}\left(Y_{\left[1\right]i}\leq t\right),
\end{equation}

where $\mathbb{I}\left(.\right)$ is the indicator function.

In order to obtain the asymptotic distribution of $F_{n}\left(t\right)$,
note that

\begin{align*}
F_{n}\left(t\right) & =\frac{1}{n}\sum_{i=1}^{n}\mathbb{I}\left(Y_{\left[1\right]i}\leq t\right)=\frac{1}{n}\sum_{r=1}^{m}\sum_{j=1}^{n_{r}}\mathbb{I}\left(Y_{\left[r\right]}^{j}\leq t\right)\\
 & =\sum_{r=1}^{m}\frac{n_{r}}{n}\times\frac{1}{n_{r}}\sum_{j=1}^{n_{r}}\mathbb{I}\left(Y_{\left[r\right]}^{j}\leq t\right)=\sum_{r=1}^{m}q_{r}F_{n,\left[r\right]}\left(t\right),
\end{align*}

where $q_{r}=\frac{n_{r}}{n}$, and $F_{n,\left[r\right]}\left(t\right)=\frac{1}{n_{r}}\sum_{j=1}^{n_{r}}\mathbb{I}\left(Y_{\left[r\right]}^{j}\leq t\right)$
is the empirical distribution function (EDF) based on the sample units
in the $r$-th stratum (for $r=1,\ldots,m$). Now, if we let $n_{r}\rightarrow+\infty,$
such that $n=\sum_{r=1}^{m}n_{r}\rightarrow+\infty$, $\frac{n_{r}}{n}\rightarrow\lambda_{r}\in\left(0,1\right)$,
and $\sum_{r=1}^{m}\lambda_{r}=1$, then, it follows from the central
limit theorem (CLT) that $\sqrt{n_{r}}\left(F_{n,\left[r\right]}\left(t\right)-F_{\left[r\right]}\left(t\right)\right)\xrightarrow{d}N\left(0,F_{\left[r\right]}\left(t\right)\left(1-F_{\left[r\right]}\left(t\right)\right)\right)$,
where $\xrightarrow{d}$ denotes convergence in distribution and $F_{\left[r\right]}\left(t\right)$
is the CDF of the sample units in the $r$-th stratum. Therefore,
one can write:

\begin{align*}
\sqrt{n}\left(F_{n}\left(t\right)-F_{\lambda}\left(t\right)\right) & =\sqrt{n}\left(\sum_{r=1}^{m}\left(q_{r}F_{n,\left[r\right]}\left(t\right)-\lambda_{r}F_{\left[r\right]}\left(t\right)\right)\right)\\
 & =\left(\sqrt{n}\sum_{r=1}^{m}q_{r}\left(F_{n,\left[r\right]}\left(t\right)-F_{\left[r\right]}\left(t\right)\right)\right)+\sqrt{n}\left(\sum_{r=1}^{m}F_{\left[r\right]}\left(t\right)\left(q_{r}-\lambda_{r}\right)\right)\\
 & =\sum_{r=1}^{m}\left(\sqrt{q_{r}}\times\sqrt{n_{r}}\left(F_{n,\left[r\right]}\left(t\right)-F_{\left[r\right]}\left(t\right)\right)\right)+\sqrt{n}\left(\sum_{r=1}^{m}F_{\left[r\right]}\left(t\right)\left(q_{r}-\lambda_{r}\right)\right)\\
 & \xrightarrow{d}N\left(0,\sum_{r=1}^{m}\lambda_{r}F_{\left[r\right]}\left(t\right)\left(1-F_{\left[r\right]}\left(t\right)\right)\right),
\end{align*}

where $F_{\lambda}\left(t\right)=\sum_{r=1}^{m}\lambda_{r}F_{\left[r\right]}\left(t\right)$
and the last convergence follows Slutsky's theorem. The derivation
of the moment-based estimator requires a perfect ranking assumption.
Assume that the ranking is perfect, then one can simply show that
the CDF $Y_{\left(r\right)}^{j}$ is given by:

\[
F_{\left(r\right)}\left(t\right)=\frac{1}{r}\sum_{i=1}^{r}\mathbb{B}\left(F\left(t\right),i,m+1-i\right),
\]

where

\[
\mathbb{B}\left(x,i,m+1-i\right)=\int_{0}^{x}i\binom{m}{i}x^{i-1}\left(1-x\right)^{m-i}dx,
\]

is the CDF of the beta distribution with parameters $i$ and $m+1-i$,
evaluated at point $x$. Therefore, the expectation of $F_{n}\left(t\right)$
under the perfect ranking assumption can be obtained as:

\begin{align*}
\mathbb{E}\left(F_{n}\left(t\right)\right) & =\sum_{r=1}^{m}q_{r}\mathbb{E}\left(F_{n,r}\left(t\right)\right)=\sum_{r=1}^{m}q_{r}F_{\left(r\right)}\left(t\right)\\
 & =\sum_{r=1}^{m}\left(\frac{1}{r}\sum_{i=1}^{r}\mathbb{B}\left(F\left(t\right),i,m+1-i\right)\right)=g\left(F\left(t\right)\right),
\end{align*}

where $g\left(x\right)=\sum_{r=1}^{m}\left(\frac{1}{r}\sum_{i=1}^{r}\mathbb{B}\left(x,i,m+1-i\right)\right)$
is a bijective function. Thus, the moment-based estimator of the CDF
of the population $\left(F\left(t\right)\right)$ can be obtained
in the following way:

\[
F_{mb}\left(t\right)=g^{-1}\left(F_{n}\left(t\right)\right),
\]

where $g^{-1}\left(x\right)$ is the inverse of $g\left(x\right)$.

The next theorem establishes the asymptotic normality of $F_{mb}\left(t\right)$.
\begin{thm}
Let $\left\{ Y_{\left(r\right)}^{j},j=1,\ldots,n_{r};r=1,\ldots m\right\} $
be a MinPNS sample from a population with CDF $F\left(t\right)$ which
is obtained under the perfect ranking assumption. Then, $\sqrt{n}\left(F_{mb}\left(t\right)-F(t)\right)$
converges in distribution to a mean zero normal distribution with
variance 
\[
\sigma_{mb}^{2}=\left(\sum_{r=1}^{m}\lambda_{r}F_{\left(r\right)}\left(t\right)\left(1-F_{\left(r\right)}\left(t\right)\right)\right)\left(g^{'}\left(F\left(t\right)\right)\right)^{-2},
\]

when $n^{*}=\min_{r=1}^{m}n_{r}$ goes to infinity.
\end{thm}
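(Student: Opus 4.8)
The plan is to read the moment-based estimator as a smooth, invertible transformation of the naive estimator, $F_{mb}(t)=g^{-1}(F_n(t))$, whose large-sample behaviour is already in hand, and then to transfer asymptotic normality through $g^{-1}$ by the delta method. The only genuinely new ingredients beyond what has been established above are (a) the specialization of the central limit theorem for $F_n(t)$ to the perfect-ranking case, and (b) the differentiability of $g^{-1}$ at the relevant point, for which I need $g$ to be a strictly monotone smooth bijection with nonvanishing derivative.

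First I would specialize the decomposition of $\sqrt{n}(F_n(t)-F_\lambda(t))$ derived above to perfect ranking, replacing every imperfect-rank CDF $F_{[r]}$ by its perfect-rank counterpart $F_{(r)}(t)=\frac{1}{r}\sum_{i=1}^{r}\mathbb{B}(F(t),i,m+1-i)$. Conditioning on the realized stratum sizes, the $n_r$ measurements in stratum $r$ are i.i.d.\ with CDF $F_{(r)}$, so the ordinary CLT gives $\sqrt{n_r}(F_{n,(r)}(t)-F_{(r)}(t))\xrightarrow{d}N(0,F_{(r)}(t)(1-F_{(r)}(t)))$. Writing $\sqrt{n}\,q_r(F_{n,(r)}(t)-F_{(r)}(t))=\sqrt{q_r}\,\sqrt{n_r}(F_{n,(r)}(t)-F_{(r)}(t))$, using $q_r\to\lambda_r$, and combining the $m$ independent strata through Slutsky's theorem yields $\sqrt{n}(F_n(t)-g(F(t)))\xrightarrow{d}N(0,V(t))$ with $V(t)=\sum_{r=1}^{m}\lambda_r F_{(r)}(t)(1-F_{(r)}(t))$, where I identify the centering constant $g(F(t))$ with the limiting mean of $F_n(t)$. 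Next, since each Beta CDF $\mathbb{B}(\cdot,i,m+1-i)$ is smooth and strictly increasing on $(0,1)$, $g$ is a smooth strictly increasing bijection, so $g^{-1}$ is differentiable at $g(F(t))$ with derivative $(g'(F(t)))^{-1}$ whenever $g'(F(t))\neq 0$. Applying the delta method to $F_{mb}(t)=g^{-1}(F_n(t))$ and $g^{-1}(g(F(t)))=F(t)$ then gives $\sqrt{n}(F_{mb}(t)-F(t))\xrightarrow{d}N(0,\sigma_{mb}^2)$ with $\sigma_{mb}^2=V(t)\,(g'(F(t)))^{-2}$, which is precisely the asserted variance.

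The step I expect to demand the most care is the treatment of the stratum counts $n_r$, which are random (they are generated by the declared ties), so the cleanest reading of the display above is conditional on $\{n_r\}$ under the regime $n^{*}=\min_{r} n_r\to\infty$ with $n_r/n\to\lambda_r$. In particular one must check that the drift term $\sqrt{n}\sum_{r}F_{(r)}(t)(q_r-\lambda_r)$ appearing in the decomposition does not perturb the limit; centering $F_n(t)$ at its exact conditional mean $\sum_{r}q_r F_{(r)}(t)$, rather than at $\sum_{r}\lambda_r F_{(r)}(t)$, sidesteps this term entirely. A related subtlety is that a computable estimator inverts the sample-weighted map $g_q(x)=\sum_{r}q_r\frac{1}{r}\sum_{i=1}^{r}\mathbb{B}(x,i,m+1-i)$ rather than its $\lambda$-weighted limit, so I would run the delta method in its mean-value form and invoke $g_q'(F(t))\to g'(F(t))$ to recover $(g'(F(t)))^{-2}$ in the limit. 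Finally, I would record the boundary requirement $F(t)\in(0,1)$, which guarantees $g'(F(t))>0$ and hence a nondegenerate limiting variance.
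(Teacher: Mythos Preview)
Your proposal is correct and follows essentially the same route as the paper: both arguments apply the delta method (equivalently, a first-order Taylor expansion of $g^{-1}$ around $g(F(t))$) to transfer the already-established asymptotic normality of $F_n(t)$ to $F_{mb}(t)=g^{-1}(F_n(t))$, using $(g^{-1})'(g(F(t)))=1/g'(F(t))$. Your treatment is in fact more careful than the paper's brief proof, which simply writes the Taylor expansion with an $O_p(n^{-1/2})$ remainder and does not address the $g_q$ versus $g$ distinction or the boundary condition $F(t)\in(0,1)$ that you flag.
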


\begin{proof}
Note that:

\begin{align*}
\sqrt{n}\left(F_{mb}\left(t\right)-F(t)\right) & =\sqrt{n}\left(g^{-1}\left(F_{n}\left(t\right)\right)-g^{-1}\left(g\left(F(t)\right)\right)\right)\\
 & =\sqrt{n}\left(g^{-1}\right)^{'}\left(g\left(F(t)\right)\right)\left(F_{n}\left(t\right)-g\left(F(t)\right)\right)+O_{p}\left(n^{-\frac{1}{2}}\right)\\
 & =\sqrt{n}\frac{F_{n}\left(t\right)-F_{\lambda}\left(t\right)}{g^{'}\left(F(t)\right)}+O_{p}\left(n^{-\frac{1}{2}}\right),
\end{align*}

and the proof is completed considering $$\sqrt{n}(F_{n}\left(t\right)-F_{\lambda}\left(t\right))\xrightarrow{d}N(0,\sum_{r=1}^{m}\lambda_{r}F_{\left(r\right)}\left(t\right)(1-F_{\left(r\right)}\left(t\right)))$$
when $n^{*}\rightarrow+\infty$.
\end{proof}

\subsection{Maximum likelihood approach}

The maximum likelihood (ML) estimator of the population CDF requires
perfect ranking assumption. If we let $Y_{\left(r\right)}^{+}=\sum_{j=1}^{n_{r}}\mathbb{I}\left(Y_{\left(r\right)}^{j}\leq t\right)$,
then $Y_{\left(r\right)}^{+}$ follows a Binomial distribution with
the mass parameter $n_{r}$ and the success probability $F_{\left(r\right)}\left(t\right)=\frac{1}{r}\sum_{i=1}^{r}\mathbb{B}\left(F(t\right),i,m+1-i)$.
Therefore, using the values of $y_{\left(r\right)}^{+}$, the likelihood
function of $F\left(t\right)$ can be written as:

\[
L\left(F\left(t\right)\right)=\prod_{r=1}^{m}\binom{n_{r}}{y_{\left(r\right)}^{+}}F_{\left(r\right)}\left(t\right)^{y_{\left(r\right)}^{+}}\left(1-F_{\left(r\right)}\left(t\right)\right)^{n_{r}-y_{\left(r\right)}^{+}},
\]

Thus, the log-likelihood function is obtained as:

\begin{align*}
\mathscr{L}\left(F\left(t\right)\right) & =log\left(L\left(F\left(t\right)\right)\right)\\
 & =\sum_{r=1}^{m}\left[log\binom{n_{r}}{y_{\left(r\right)}^{+}}+y_{\left(r\right)}^{+}log\left(F_{\left(r\right)}\left(t\right)\right)+\left(n_{r}-y_{\left(r\right)}^{+}\right)log\left(1-F_{\left(r\right)}\left(t\right)\right)\right].
\end{align*}

Therefore, the ML estimator of the CDF $\left(F\left(t\right)\right)$
is defined as: 
\[
F_{ml}\left(t\right)=\arg\max_{F\left(t\right)\in\left[0,1\right]}\mathscr{L}\left(F\left(t\right)\right).
\]

To show the existence and uniqueness of $F_{ml}\left(t\right)$, we
need to prove that $\mathscr{L}\left(F\left(t\right)\right)$ is concave
in $F\left(t\right)$. This requires the log-concavity of $F_{\left(r\right)}\left(t\right)=\frac{1}{r}\sum_{i=1}^{r}\mathbb{B}\left(F(t\right),i,m+1-i)$
in $F(t)$, which follows from Theorem 2 in \citet{Mu}.

To establish the asymptotic normality of $F_{ml}\left(t\right)$,
we first need to mention the following result from \citet{Hjort}.
\begin{thm}
Suppose that $A_{n}\left(h\right)$ is a concave function in $h$
which can be represented as $A_{n}\left(h\right)=U_{n}h+\frac{1}{2}h^{2}V+r_{n}\left(h\right)$
, where $U_{n}$, $r_{n}$ are two sequences of random variables and
$V$ is a positive number. If $r_{n}\xrightarrow{p}0$ and $U_{n}\xrightarrow{d}U$
when $n$ goes to infinity, then $\alpha_{n}=\arg\max_{h}A_{n}\left(h\right)\xrightarrow{d}V^{-1}U$.
\end{thm}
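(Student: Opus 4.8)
The plan is to localise the maximiser by exploiting concavity, reducing the whole statement to a single $o_{p}(1)$ claim. Since $V$ is a positive scalar and $U_{n}$, $h$ are scalars, the quadratic part of $A_{n}$ is maximised at the explicit point $\tilde{\alpha}_{n}=V^{-1}U_{n}$, obtained from its first-order condition. Because $x\mapsto V^{-1}x$ is continuous, the continuous mapping theorem turns the hypothesis $U_{n}\xrightarrow{d}U$ into $\tilde{\alpha}_{n}\xrightarrow{d}V^{-1}U$. Hence, by Slutsky's theorem, it suffices to prove
\[
\alpha_{n}-\tilde{\alpha}_{n}\xrightarrow{p}0 ,
\]
that is, that the maximiser of the full concave process clings to the maximiser of its quadratic part. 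Everything then rests on controlling the remainder $r_{n}$ near $\tilde{\alpha}_{n}$.

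For the $o_{p}(1)$ claim I would use a one-dimensional sandwiching argument. Fix $\varepsilon>0$. Since $A_{n}$ is concave in the scalar $h$, the event $\{|\alpha_{n}-\tilde{\alpha}_{n}|<\varepsilon\}$ is implied by the strict inequality $A_{n}(\tilde{\alpha}_{n})>\max\{A_{n}(\tilde{\alpha}_{n}+\varepsilon),A_{n}(\tilde{\alpha}_{n}-\varepsilon)\}$: once a concave function strictly decreases as one steps $\varepsilon$ to either side of $\tilde{\alpha}_{n}$, its maximiser must lie strictly between those two steps. Writing $A_{n}=Q_{n}+r_{n}$ with $Q_{n}$ the quadratic part, the contribution of $Q_{n}$ to each of the two differences is the deterministic gap $\tfrac{1}{2}V\varepsilon^{2}>0$, because $\tilde{\alpha}_{n}$ is the vertex of the parabola $Q_{n}$. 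The remaining contribution is $r_{n}(\tilde{\alpha}_{n})-r_{n}(\tilde{\alpha}_{n}\pm\varepsilon)$, so the required strict inequalities hold as soon as this remainder contribution is smaller in absolute value than $\tfrac{1}{2}V\varepsilon^{2}$.

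It therefore remains to show that the remainder differences evaluated at the \emph{random} points $\tilde{\alpha}_{n}$ and $\tilde{\alpha}_{n}\pm\varepsilon$ tend to zero in probability. Here I would invoke the convexity lemma underlying \citet{Hjort}: for concave functions, pointwise stochastic convergence automatically upgrades to uniform stochastic convergence on every compact set. Since $\tilde{\alpha}_{n}=V^{-1}U_{n}$ is stochastically bounded (as $U_{n}\xrightarrow{d}U$), the three evaluation points lie, with probability arbitrarily close to one, in a fixed compact interval; on such an interval the pointwise hypothesis $r_{n}(h)\xrightarrow{p}0$ becomes uniform, so $r_{n}(\tilde{\alpha}_{n})-r_{n}(\tilde{\alpha}_{n}\pm\varepsilon)\xrightarrow{p}0$ and eventually falls below $\tfrac{1}{2}V\varepsilon^{2}$. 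Consequently $P(|\alpha_{n}-\tilde{\alpha}_{n}|>\varepsilon)\to 0$ for every $\varepsilon>0$, which is the desired $\alpha_{n}=\tilde{\alpha}_{n}+o_{p}(1)$, and the theorem follows.

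I expect the genuine obstacle to be precisely this passage from pointwise to uniform control around a moving centre. Because $U_{n}$ converges only in distribution, $\tilde{\alpha}_{n}$ does not converge to a constant, so one cannot simply quote ordinary $\arg\max$ consistency toward a fixed maximiser. The clean way to make the uniformity rigorous is to pass to subsequences and use an almost-sure (Skorokhod) coupling of $U_{n}$ to $U$, apply the convexity lemma to the concave $A_{n}$ along the coupling so that $A_{n}$ converges locally uniformly to the concave limit $Uh-\tfrac{1}{2}Vh^{2}$, and then recover the in-probability statement by the subsequence principle. Making that upgrade airtight---rather than merely quoting local-uniform convergence of convex functions---is the technical heart of the argument, and it is exactly where the concavity hypothesis is indispensable.
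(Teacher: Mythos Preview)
The paper does not prove this statement at all: Theorem~2 is simply quoted from \citet{Hjort} as a tool and then invoked in the proof of Theorem~3. So there is no ``paper's own proof'' to compare against; what you have written is essentially a sketch of the Hjort--Pollard argument itself, and it is the right one. The localisation via the parabolic vertex $\tilde{\alpha}_{n}=V^{-1}U_{n}$, the two-point sandwich on $\tilde{\alpha}_{n}\pm\varepsilon$ exploiting one-dimensional concavity, and the tightness of $\tilde{\alpha}_{n}$ coming from $U_{n}\xrightarrow{d}U$ are exactly the ingredients of their Basic Corollary.

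One small point worth tightening: when you invoke the convexity lemma to pass from pointwise to locally uniform convergence of $r_{n}$, note that $r_{n}$ itself need not be concave (it is the difference of two concave functions). The clean fix is to apply the lemma to the genuinely concave process $h\mapsto A_{n}(h)-U_{n}h$, which converges pointwise in probability to the deterministic $-\tfrac{1}{2}Vh^{2}$; uniform convergence of this difference on compacts is then equivalent to uniform convergence of $r_{n}$ on compacts, and your sandwich goes through. With that adjustment the argument is airtight and matches the source you (and the paper) cite.
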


Now, we are ready to present the following asymptotic result for $F_{ml}\left(t\right)$.
\begin{thm}
Let $\left\{ Y_{\left(r\right)}^{j},j=1,\ldots,n_{r};r=1,\ldots m\right\} $
be a MinPNS sample from a population with CDF $\left(F\left(t\right)\right)$
which is obtained under the perfect ranking assumption. If we let
$n_{r}\rightarrow+\infty,$ such that $n=\sum_{r=1}^{m}n_{r}\rightarrow+\infty$,
$\frac{n_{r}}{n}\rightarrow\lambda_{r}\in\left(0,1\right)$, and $\sum_{r=1}^{m}\lambda_{r}=1$,
then $\sqrt{n}\left(F_{ml}\left(t\right)-F(t)\right)$ converges in
distribution to a mean zero normal distribution with the variance
\[
\sigma_{ml}^{2}=\left(\sum_{r=1}^{m}\lambda_{r}\frac{\beta_{\left(r\right)}^{2}\left(F\left(t\right)\right)}{\mathbb{B}_{\left(r\right)}\left(F\left(t\right)\right)\left(1-\mathbb{B}_{\left(r\right)}\left(F\left(t\right)\right)\right)}\right)^{-1},
\]

where $\beta_{\left(r\right)}\left(F\left(t\right)\right)=\left(\mathbb{B}_{\left(r\right)}\left(F\left(t\right)\right)\right)^{/}=\frac{1}{r}\sum_{i=1}^{r}\mathbb{\beta}\left(F\left(t\right),i,m+1-i\right)$,
and $\mathbb{\beta}(t,i,m+1-i)$ is the probability density function
of the beta distribution with the parameters $i$ and $m+1-i$, evaluated
at the point $t$.
\end{thm}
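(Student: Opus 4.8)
The plan is to apply the asymptotic argmax result (Theorem 2, due to \citet{Hjort}) after a local reparametrization of the log-likelihood. Writing $F\left(t\right)$ for the true value and introducing a local parameter $h$, I would consider the candidate CDF value $F\left(t\right)+h/\sqrt{n}$ and define the recentred, rescaled objective $A_{n}\left(h\right)=\mathscr{L}\left(F\left(t\right)+h/\sqrt{n}\right)-\mathscr{L}\left(F\left(t\right)\right)$, whose maximiser is exactly $\widehat{h}=\sqrt{n}\left(F_{ml}\left(t\right)-F\left(t\right)\right)$. The crucial structural input is already in hand: since $\mathbb{B}_{\left(r\right)}$ is log-concave (Theorem 2 in \citet{Mu}), the map $\mathscr{L}$ is concave in its argument, so $A_{n}$ is concave in $h$. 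This is precisely the hypothesis that makes the argmax theorem applicable and, importantly, it frees the argument from having to establish consistency or a uniform law of large numbers separately; the concavity forces the constrained maximiser over $\left[0,1\right]$ to agree asymptotically with the unconstrained one once we know the true value is interior.

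The next step is to produce the quadratic expansion required by Theorem 2 by Taylor-expanding $A_{n}$ to second order about $h=0$, obtaining $A_{n}\left(h\right)=U_{n}h-\tfrac{1}{2}h^{2}V+r_{n}\left(h\right)$ with $V>0$. Differentiating the log-likelihood and using the identity $\tfrac{y}{p}-\tfrac{n-y}{1-p}=\tfrac{y-np}{p\left(1-p\right)}$, the score collapses to the clean centred-binomial form
\[
\mathscr{L}^{'}\left(F\left(t\right)\right)=\sum_{r=1}^{m}\frac{\beta_{\left(r\right)}\left(F\left(t\right)\right)}{\mathbb{B}_{\left(r\right)}\left(F\left(t\right)\right)\left(1-\mathbb{B}_{\left(r\right)}\left(F\left(t\right)\right)\right)}\left(y_{\left(r\right)}^{+}-n_{r}\mathbb{B}_{\left(r\right)}\left(F\left(t\right)\right)\right),
\]
so that $U_{n}=\tfrac{1}{\sqrt{n}}\mathscr{L}^{'}\left(F\left(t\right)\right)$ is a weighted sum of independent centred binomials, while the second-order coefficient is governed by $\tfrac{1}{n}\mathscr{L}^{''}\left(F\left(t\right)\right)$. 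I would identify the curvature as the limiting per-observation Fisher information $V=\sum_{r=1}^{m}\lambda_{r}\beta_{\left(r\right)}^{2}\left(F\left(t\right)\right)\big/\left[\mathbb{B}_{\left(r\right)}\left(F\left(t\right)\right)\left(1-\mathbb{B}_{\left(r\right)}\left(F\left(t\right)\right)\right)\right]$, which is exactly $\sigma_{ml}^{-2}$.

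There are then two convergences to verify. For $U_{n}\xrightarrow{d}U$, note that $y_{\left(r\right)}^{+}$ is $\mathrm{Binomial}\left(n_{r},\mathbb{B}_{\left(r\right)}\left(F\left(t\right)\right)\right)$, so the binomial CLT gives $\left(y_{\left(r\right)}^{+}-n_{r}\mathbb{B}_{\left(r\right)}\right)/\sqrt{n}=\sqrt{n_{r}/n}\cdot\left(y_{\left(r\right)}^{+}-n_{r}\mathbb{B}_{\left(r\right)}\right)/\sqrt{n_{r}}\xrightarrow{d}\sqrt{\lambda_{r}}\,N\left(0,\mathbb{B}_{\left(r\right)}\left(1-\mathbb{B}_{\left(r\right)}\right)\right)$; the strata are independent, so summing and computing the variance of the weighted combination yields $U_{n}\xrightarrow{d}U\sim N\left(0,V\right)$. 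For the remainder, I would absorb both the third-order Taylor term and the gap between the observed and limiting information into $r_{n}\left(h\right)$ and show $r_{n}\left(h\right)\xrightarrow{p}0$ for each fixed $h$; this is the step that needs the most care, and it is where I expect the main obstacle to lie. It rests on the regularity assumption $F\left(t\right)\in\left(0,1\right)$ (so that $\mathbb{B}_{\left(r\right)}\left(F\left(t\right)\right)\in\left(0,1\right)$ and the denominators stay bounded away from zero), together with the boundedness of $\mathbb{B}_{\left(r\right)}$ and its first two derivatives on a neighbourhood of $F\left(t\right)$, which bounds the Lagrange remainder uniformly and, via the weak law of large numbers applied to $\tfrac{1}{n}\mathscr{L}^{''}$, forces $\tfrac{1}{n}\mathscr{L}^{''}\left(F\left(t\right)\right)\xrightarrow{p}-V$. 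With the expansion in place and its hypotheses checked, Theorem 2 delivers $\widehat{h}\xrightarrow{d}V^{-1}U\sim N\left(0,V^{-1}\right)$, and since $\widehat{h}=\sqrt{n}\left(F_{ml}\left(t\right)-F\left(t\right)\right)$ and $V^{-1}=\sigma_{ml}^{2}$, the claimed asymptotic normality follows.
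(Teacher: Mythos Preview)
Your proposal is correct and follows essentially the same approach as the paper: a local reparametrisation $h=\sqrt{n}\left(F_{ml}(t)-F(t)\right)$, a quadratic expansion of $A_{n}(h)=\mathscr{L}\left(F(t)+h/\sqrt{n}\right)-\mathscr{L}\left(F(t)\right)$, the binomial CLT applied strata-wise to the score, and then the Hjort--Pollard argmax result (Theorem~2). The only cosmetic difference is that the paper expands $\log\left(1+\Delta\mathbb{B}_{(r)}/\mathbb{B}_{(r)}\right)$ directly, whereas you differentiate $\mathscr{L}$ and invoke the identity $y/p-(n-y)/(1-p)=(y-np)/\left[p(1-p)\right]$ to reach the same centred-binomial score; the resulting $U_{n}$, $V$, and remainder are identical.
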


\begin{proof}
Using Taylor series expansion, one can write:

\begin{eqnarray*}
 &  & \mathscr{L}\left(F\left(t\right)+\frac{h}{\sqrt{n}}\right)-\mathscr{L}\left(F\left(t\right)\right)\\
 &  & =\sum_{r=1}^{m}\left[y_{\left(r\right)}^{+}log\left(\frac{\mathbb{B}_{\left(r\right)}\left(F\left(t\right)+\frac{h}{\sqrt{n}}\right)}{\mathbb{B}_{\left(r\right)}\left(F\left(t\right)\right)}\right)+\left(n_{r}-y_{\left(r\right)}^{+}\right)log\left(\frac{1-\mathbb{B}_{\left(r\right)}\left(F\left(t\right)+\frac{h}{\sqrt{n}}\right)}{1-\mathbb{B}_{\left(r\right)}\left(F\left(t\right)\right)}\right)\right]\\
 &  & =\sum_{r=1}^{m}y_{\left(r\right)}^{+}log\left(1+\frac{\mathbb{B}_{\left(r\right)}\left(F\left(t\right)+\frac{h}{\sqrt{n}}\right)-\mathbb{B}_{\left(r\right)}\left(F\left(t\right)\right)}{\mathbb{B}_{\left(r\right)}\left(F\left(t\right)\right)}\right)\\
 &  & +\sum_{r=1}^{m}\left(n_{r}-y_{\left(r\right)}^{+}\right)log\left(1-\frac{\mathbb{B}_{\left(r\right)}\left(F\left(t\right)+\frac{h}{\sqrt{n}}\right)-\mathbb{B}_{\left(r\right)}\left(F\left(t\right)\right)}{1-\mathbb{B}_{\left(r\right)}\left(F\left(t\right)\right)}\right)\\
 &  & =\sum_{r=1}^{m}\left[y_{\left(r\right)}^{+}\left(\frac{\mathbb{B}_{\left(r\right)}\left(F\left(t\right)+\frac{h}{\sqrt{n}}\right)-\mathbb{B}_{\left(r\right)}\left(F\left(t\right)\right)}{\mathbb{B}_{\left(r\right)}\left(F\left(t\right)\right)}\right)-\frac{1}{2}y_{\left(r\right)}^{+}\left(\frac{\mathbb{B}_{\left(r\right)}\left(F\left(t\right)+\frac{h}{\sqrt{n}}\right)-\mathbb{B}_{\left(r\right)}\left(F\left(t\right)\right)}{\mathbb{B}_{\left(r\right)}\left(F\left(t\right)\right)}\right)^{2}\right]\\
 &  & -\sum_{r=1}^{m}\left(n_{r}-y_{\left(r\right)}^{+}\right)\left(\frac{\mathbb{B}_{\left(r\right)}\left(F\left(t\right)+\frac{h}{\sqrt{n}}\right)-\mathbb{B}_{\left(r\right)}\left(F\left(t\right)\right)}{1-\mathbb{B}_{\left(r\right)}\left(F\left(t\right)\right)}\right)\\
 &  & -\frac{1}{2}\sum_{r=1}^{m}\left(n_{r}-y_{\left(r\right)}^{+}\right)\left(\frac{\mathbb{B}_{\left(r\right)}\left(F\left(t\right)+\frac{h}{\sqrt{n}}\right)-\mathbb{B}_{\left(r\right)}\left(F\left(t\right)\right)}{1-\mathbb{B}_{\left(r\right)}\left(F\left(t\right)\right)}\right)^{2}+R\left(h,n\right)
\end{eqnarray*}

\begin{eqnarray*}
 &  & =\sum_{r=1}^{m}\left[\left(\mathbb{B}_{\left(r\right)}\left(F\left(t\right)+\frac{h}{\sqrt{n}}\right)-\mathbb{B}_{\left(r\right)}\left(F\left(t\right)\right)\right)\left(\frac{y_{\left(r\right)}^{+}}{\mathbb{B}_{\left(r\right)}\left(F\left(t\right)\right)}-\frac{n_{r}-y_{\left(r\right)}^{+}}{1-\mathbb{B}_{\left(r\right)}\left(F\left(t\right)\right)}\right)\right]\\
 &  & -\frac{1}{2}\sum_{r=1}^{m}\left[\left(\mathbb{B}_{\left(r\right)}\left(F\left(t\right)+\frac{h}{\sqrt{n}}\right)-\mathbb{B}_{\left(r\right)}\left(F\left(t\right)\right)\right)^{2}\left(\frac{y_{\left(r\right)}^{+}}{\left(\mathbb{B}_{\left(r\right)}\left(F\left(t\right)\right)\right)^{2}}-\frac{n_{r}-y_{\left(r\right)}^{+}}{\left(1-\mathbb{B}_{\left(r\right)}\left(F\left(t\right)\right)\right)^{2}}\right)\right]+R\left(h,n\right)\\
 &  & =\sum_{r=1}^{m}\left[\left(\mathbb{B}_{\left(r\right)}\left(F\left(t\right)+\frac{h}{\sqrt{n}}\right)-\mathbb{B}_{\left(r\right)}\left(F\left(t\right)\right)\right)\left(\frac{y_{\left(r\right)}^{+}-n_{r}y_{\left(r\right)}^{+}}{\mathbb{B}_{\left(r\right)}\left(F\left(t\right)\right)\left(1-\mathbb{B}_{\left(r\right)}\left(F\left(t\right)\right)\right)}\right)\right]\\
 &  & -\frac{1}{2}\sum_{r=1}^{m}\left[\left(\mathbb{B}_{\left(r\right)}\left(F\left(t\right)+\frac{h}{\sqrt{n}}\right)-\mathbb{B}_{\left(r\right)}\left(F\left(t\right)\right)\right)^{2}\left(\frac{y_{\left(r\right)}^{+}}{\left(\mathbb{B}_{\left(r\right)}\left(F\left(t\right)\right)\right)^{2}}-\frac{n_{r}-y_{\left(r\right)}^{+}}{\left(1-\mathbb{B}_{\left(r\right)}\left(F\left(t\right)\right)\right)^{2}}\right)\right]+R\left(h,n\right),
\end{eqnarray*}

where $R\left(h,n\right)$ is a remainder term. Note that since $Y_{\left(r\right)}^{+}$
follows a binomial distribution with mass parameter $n_{r}$ and success
probability $\mathbb{B}_{\left(r\right)}\left(F\left(t\right)\right)$,
one can write:

\begin{align*}
\frac{y_{\left(r\right)}^{+}-n_{r}y_{\left(r\right)}^{+}}{\sqrt{n_{r}}\left(\mathbb{B}_{\left(r\right)}\left(F\left(t\right)\right)\left(1-\mathbb{B}_{\left(r\right)}\left(F\left(t\right)\right)\right)\right)}\xrightarrow{d}Z_{r}\equiv N\left(0,\frac{1}{\mathbb{B}_{\left(r\right)}\left(F\left(t\right)\right)\left(1-\mathbb{B}_{\left(r\right)}\left(F\left(t\right)\right)\right)}\right) & when\textrm{ }n_{r}\rightarrow+\infty.\\
\end{align*}

Moreover, we have:
\begin{align*}
\sqrt{n_{r}}\left(\mathbb{B}_{\left(r\right)}\left(F\left(t\right)\right)\left(1-\mathbb{B}_{\left(r\right)}\left(F\left(t\right)\right)\right)\right)\rightarrow\sqrt{\lambda_{r}}h\mathbb{\beta}_{\left(r\right)}\left(F\left(t\right)\right) & when\textrm{ }n\rightarrow+\infty.\\
\end{align*}

Therefore,

\begin{eqnarray*}
 &  & \mathscr{L}\left(F\left(t\right)+\frac{h}{\sqrt{n}}\right)-\mathscr{L}\left(F\left(t\right)\right)\xrightarrow{}\\
 &  & h\sum_{r=1}^{m}\sqrt{\lambda_{r}}\mathbb{\beta}_{\left(r\right)}\left(F\left(t\right)\right)Z_{r}-\frac{1}{2}h^{2}\left(\sum_{r=1}^{m}\lambda_{r}\frac{\beta_{\left(r\right)}^{2}\left(F\left(t\right)\right)}{\mathbb{B}_{\left(r\right)}\left(F\left(t\right)\right)\left(1-\mathbb{B}_{\left(r\right)}\left(F\left(t\right)\right)\right)}\right)\\
 &  & =hZ-\frac{1}{2}h^{2}\left(\sigma_{ml}^{2}\right)^{-1},
\end{eqnarray*}

where $Z=\sum_{r=1}^{m}\sqrt{\lambda_{r}}\mathbb{\beta}_{\left(r\right)}\left(F\left(t\right)\right)Z_{r}$
follows a mean zero normal distribution with the variance $\left(\sigma_{ml}^{2}\right)^{-1}$.
Now, if we let $A_{n}\left(h\right)\equiv\mathscr{L}\left(F\left(t\right)+\frac{h}{\sqrt{n}}\right)-\mathscr{L}\left(F\left(t\right)\right)$,
then it can be written as $A_{n}\left(h\right)=-\frac{1}{2}h^{2}\left(\sigma_{ml}^{2}\right)^{-1}+hU_{n}+r_{n}\left(h\right)$,
where

\[
U_{n}\equiv\sum_{r=1}^{m}\left[\left(\mathbb{B}_{\left(r\right)}\left(F\left(t\right)+\frac{h}{\sqrt{n}}\right)-\mathbb{B}_{\left(r\right)}\left(F\left(t\right)\right)\right)\left(\frac{y_{\left(r\right)}^{+}-n_{r}y_{\left(r\right)}^{+}}{\mathbb{B}_{\left(r\right)}\left(F\left(t\right)\right)\left(1-\mathbb{B}_{\left(r\right)}\left(F\left(t\right)\right)\right)}\right)\right]
\]

converges in distribution to a mean zero normal distribution with
the variance $\left(\sigma_{ml}^{2}\right)^{-1}$, and

\begin{eqnarray*}
 &  & r_{n}\left(h\right)\equiv-\frac{1}{2}\left\{ \sum_{r=1}^{m}\left[\left(\mathbb{B}_{\left(r\right)}\left(F\left(t\right)+\frac{h}{\sqrt{n}}\right)-\mathbb{B}_{\left(r\right)}\left(F\left(t\right)\right)\right)^{2}\left(\frac{y_{\left(r\right)}^{+}}{\left(\mathbb{B}_{\left(r\right)}\left(F\left(t\right)\right)\right)^{2}}-\frac{n_{r}-y_{\left(r\right)}^{+}}{\left(1-\mathbb{B}_{\left(r\right)}\left(F\left(t\right)\right)\right)^{2}}\right)\right]\right\} \\
 &  & -\frac{h^{2}}{\sigma_{ml}^{2}}+R\left(h,n\right)
\end{eqnarray*}

converges in probability to zero when $n\rightarrow+\infty$. Thus,
it follows from Theorem 2 that $\alpha_{n}=\arg\max_{h}A_{n}\left(h\right)\xrightarrow{d}N\left(0,\sigma_{ml}^{2}\right)$.
Note that since $\mathscr{L}\left(F\left(t\right)\right)$ is maximized
at $F_{ml}\left(t\right)$, $\alpha_{n}=\sqrt{n}\left(F_{ml}\left(t\right)-F\left(t\right)\right)$
which completes the proof.
\end{proof}

\section{Comparisons\label{sec:Comparisons}}

In this section, the performance of two CDF estimators in MinPNS is
compared with those of that counterparts in SRS. Let $X_{1},\ldots,X_{n}$
be a SRS sample with the size of $n$ from a population with the CDF
of $F$. It is well-known that in SRS, the ML and MB estimators of
the CDF are the same and are obtained by:

\[
F_{n}\left(t\right)=\frac{1}{n}\sum_{i=1}^{n}\mathbb{I}\left(X_{i}\leq t\right).
\]

This estimator is unbiased and $\sqrt{n}\left(F_{n}\left(t\right)-F\left(t\right)\right)$
converges in distribution to a mean zero normal distribution with
variance $\sigma_{srs}^{2}\left(t\right)=F\left(t\right)\left(1-F\left(t\right)\right)$.

\subsection{Asymptotic Comparison}

In this subsection, the asymptotic performance of $F_{ml}\left(t\right)$
and $F_{mb}\left(t\right)$ are first compared. Then, their performances
are compared with that of $F_{n}\left(t\right)$. This requires the
perfect ranking assumption. The next result shows that under the perfect
ranking assumption, $F_{ml}\left(t\right)$ is at least as asymptotically
efficient as $F_{mb}\left(t\right)$.
\begin{thm}
Under the perfect ranking assumption, $\sigma_{ml}^{2}\left(t\right)\leq\sigma_{ml}^{2}\left(t\right)$,
and equality holds if and only if either $m=1,$ or $\exists r\in\left\{ 1,\ldots,m\right\} $
such that $\lambda_{r}=1$.
\end{thm}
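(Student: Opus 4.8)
Read literally, the displayed inequality $\sigma_{ml}^{2}(t)\le\sigma_{ml}^{2}(t)$ is vacuous; since the surrounding text asserts that $F_{ml}(t)$ is at least as asymptotically efficient as $F_{mb}(t)$, the intended statement is $\sigma_{ml}^{2}(t)\le\sigma_{mb}^{2}(t)$, and that is what the plan below proves. The idea is to put both asymptotic variances in a common notation and to recognize the comparison as a single application of the Cauchy--Schwarz inequality. Fixing $t$ with $F(t)\in(0,1)$, write $p_{r}=F_{(r)}(t)=\mathbb{B}_{(r)}(F(t))$ and $b_{r}=\beta_{(r)}(F(t))$. Because the derivation of $\sigma_{mb}^{2}$ rests on the identity $F_{\lambda}(t)=g(F(t))$ with $F_{\lambda}(t)=\sum_{r=1}^{m}\lambda_{r}F_{(r)}(t)$, differentiation yields $g^{'}(F(t))=\sum_{r=1}^{m}\lambda_{r}\beta_{(r)}(F(t))=\sum_{r=1}^{m}\lambda_{r}b_{r}$. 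In this notation
\[
\sigma_{ml}^{2}=\left(\sum_{r=1}^{m}\lambda_{r}\frac{b_{r}^{2}}{p_{r}(1-p_{r})}\right)^{-1},\qquad \sigma_{mb}^{2}=\frac{\sum_{r=1}^{m}\lambda_{r}p_{r}(1-p_{r})}{\left(\sum_{r=1}^{m}\lambda_{r}b_{r}\right)^{2}}.
\]

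Passing to reciprocals, $\sigma_{ml}^{2}\le\sigma_{mb}^{2}$ is equivalent to
\[
\left(\sum_{r=1}^{m}\lambda_{r}b_{r}\right)^{2}\le\left(\sum_{r=1}^{m}\lambda_{r}\frac{b_{r}^{2}}{p_{r}(1-p_{r})}\right)\left(\sum_{r=1}^{m}\lambda_{r}p_{r}(1-p_{r})\right),
\]
which is precisely the Cauchy--Schwarz inequality for the vectors whose $r$-th coordinates are $\sqrt{\lambda_{r}}\,b_{r}/\sqrt{p_{r}(1-p_{r})}$ and $\sqrt{\lambda_{r}}\sqrt{p_{r}(1-p_{r})}$; all entries are finite and well defined since $p_{r}\in(0,1)$. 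This settles the inequality in one line.

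For the equality characterization I would use that Cauchy--Schwarz is an equality exactly when the two vectors are proportional, i.e. when $b_{r}/(p_{r}(1-p_{r}))$ takes one common value across all strata with $\lambda_{r}>0$. If $m=1$, or if some $\lambda_{r}=1$ so that only a single stratum is populated, the sums reduce to a single term and equality is automatic. The substance is the converse: whenever two or more strata carry positive weight, proportionality must fail. This reduces to the analytic assertion that, for every fixed $x=F(t)\in(0,1)$, the map
\[
r\longmapsto\phi_{r}(x):=\frac{\beta_{(r)}(x)}{\mathbb{B}_{(r)}(x)\bigl(1-\mathbb{B}_{(r)}(x)\bigr)}
\]
is injective on $\{1,\dots,m\}$.

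Proving this injectivity is the crux and the step I expect to be most delicate. A natural route exploits the strict stochastic ordering of the strata: $\mathbb{B}_{(r)}(x)=\frac{1}{r}\sum_{i=1}^{r}\mathbb{B}(x,i,m+1-i)$ is a running average of the strictly decreasing sequence $\mathbb{B}(x,1,m)>\cdots>\mathbb{B}(x,m,1)$, so $p_{1}>p_{2}>\cdots>p_{m}=x$ strictly, and one would then show that the companion densities $b_{r}$ cannot track $p_{r}(1-p_{r})$ at a common rate, forcing $\phi_{r}$ to be strictly monotone in $r$. The case $m=2$ already exhibits the mechanism: there $p_{1}=2x-x^{2}$, $b_{1}=2(1-x)$, $p_{2}=x$, $b_{2}=1$, so $\phi_{1}(x)=2/[x(2-x)(1-x)]$ and $\phi_{2}(x)=1/[x(1-x)]$ coincide only at $x=0$ and hence differ throughout $(0,1)$. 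Carrying out the corresponding monotonicity estimate for arbitrary $m$ -- rather than the Cauchy--Schwarz step, which is routine -- is where the real effort will go.
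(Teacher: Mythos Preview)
Your inequality argument is correct and is essentially the paper's proof in a different dress. The paper introduces a random index $R$ with $\mathbb{P}(R=r)\propto\lambda_{r}\beta_{(r)}(F(t))$, sets $C_{r}=\beta_{(r)}(F(t))/[\mathbb{B}_{(r)}(F(t))(1-\mathbb{B}_{(r)}(F(t)))]$, and applies Jensen in the form $(\mathbb{E}[C_{R}])^{-1}\le\mathbb{E}[C_{R}^{-1}]$; unwinding this yields exactly your Cauchy--Schwarz display, so the two routes are formally equivalent and either is a one-line proof.

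On the equality characterization you are actually more careful than the paper. The paper merely asserts that Jensen is tight ``if $C_{R}$ has a degenerated distribution and this happens if and only if either $m=1$, or $\exists r$ with $\lambda_{r}=1$,'' without ever arguing that the values $C_{1},\dots,C_{m}$ are pairwise distinct for $x=F(t)\in(0,1)$. You correctly isolate precisely this injectivity of $r\mapsto\phi_{r}(x)$ as the substantive step, verify it for $m=2$, and note that the general $m$ case requires real work; that work is not supplied in the paper either. So your plan matches the paper for the inequality and goes strictly further on the equality condition.
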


\begin{proof}
Let $R$ be a discrete random variable with the support set $\left\{ 1,\ldots,m\right\} ,$
such that

\[
\mathbb{P}\left(R=r\right)=\frac{\lambda_{r}\beta_{\left(r\right)}\left(F\left(t\right)\right)}{\sum_{r=1}^{m}\lambda_{r}\beta_{\left(r\right)}\left(F\left(t\right)\right)},
\]

for $r\in\left\{ 1,\ldots,m\right\} $. Now, we define the vector
$\mathbf{C}=\left\{ C_{1},\ldots,C_{m}\right\} $ such that $C_{r}=\frac{\beta_{\left(r\right)}\left(F\left(t\right)\right)}{\mathbb{B}_{\left(r\right)}\left(F\left(t\right)\right)\left(1-\mathbb{B}_{\left(r\right)}\left(F\left(t\right)\right)\right)},$
for $r\in\left\{ 1,\ldots,m\right\} $. Using Jensen's inequality,
one can write: 
\[
\left(\mathbb{E}\left(C_{R}\right)\right)^{-1}\leq\mathbb{E}\left(\frac{1}{C_{R}}\right),
\]

and therefore:

\begin{align*}
\left(\sum_{r=1}^{m}\lambda_{r}\frac{\beta_{\left(r\right)}^{2}\left(F\left(t\right)\right)}{\mathbb{B}_{\left(r\right)}\left(F\left(t\right)\right)\left(1-\mathbb{B}_{\left(r\right)}\left(F\left(t\right)\right)\right)}\right)^{-1}\times\left(\sum_{r=1}^{m}\lambda_{r}\beta_{\left(r\right)}\left(F\left(t\right)\right)\right)\\
\leq\left(\frac{\sum_{r=1}^{m}\lambda_{r}\mathbb{B}_{\left(r\right)}\left(F\left(t\right)\right)\left(1-\mathbb{B}_{\left(r\right)}\left(F\left(t\right)\right)\right)}{\sum_{r=1}^{m}\lambda_{r}\beta_{\left(r\right)}\left(F\left(t\right)\right)}\right).
\end{align*}

Thus, $\sigma_{ml}^{2}\left(t\right)\leq\sigma_{mb}^{2}\left(t\right)$.
The equality holds if $C_{R}$ has a degenerated distribution and
this happens if and only if either $m=1,$ or $\exists r\in\left\{ 1,\ldots,m\right\} $
such that $\lambda_{r}=1$.

In order to have a better understanding of the asymptotic performances
of the estimators, we have compared their asymptotic efficiencies.
To do so, we set $m\in\left\{ 3,5\right\} $ and consider three different
scenarios for the vector $\mathbf{\lambda}=\left(\lambda_{1},\ldots,\lambda_{m}\right)$
as shown in Table \ref{table2}.
\end{proof}
\begin{table}
\centering %
\begin{tabular}{cccc}
\hline 
Scenario & $\mathbf{\lambda}=\left(\lambda_{1},\ldots,\lambda_{m}\right)$ & $m=3$ & $m=5$\tabularnewline
\hline 
A & $\lambda_{A}$ & $\left(\frac{4}{6},\frac{1}{6},\frac{1}{6}\right)$ & $\left(\frac{4}{10},\frac{2}{10},\frac{2}{10},\frac{1}{10},\frac{1}{10}\right)$\tabularnewline
B & $\lambda_{B}$ & $\left(\frac{2}{6},\frac{2}{6},\frac{2}{6}\right)$ & $\left(\frac{2}{10},\frac{2}{10},\frac{2}{10},\frac{2}{10},\frac{2}{10}\right)$\tabularnewline
C & $\lambda_{C}$ & $\left(\frac{1}{6},\frac{1}{6},\frac{4}{6}\right)$ & $\left(\frac{1}{10},\frac{1}{10},\frac{2}{10},\frac{2}{10},\frac{4}{10}\right)$\tabularnewline
\hline 
\end{tabular}\caption{\label{table2}{\small{}Three different scenarios for the vector}
$\mathbf{\lambda}=\left(\lambda_{1},\ldots,\lambda_{m}\right)$}
\end{table}

It is clear from Table \ref{table2} that the confidence of the ranker
to find the sample unit with the lowest rank in a set of size $m$
decreases as we move from scenario $A$ to scenario $C$. In order
to compare the asymptotic performances of $F_{mb}\left(t\right)$,
$F_{ml}\left(t\right)$ with that of $F_{n}\left(t\right)$, we have
defined the asymptotic relative efficiency (ARE) of $F_{z}\left(t\right)$
with respect to $F_{n}\left(t\right)$ as:

\[
ARE\left(t\right)=\frac{\sigma_{srs}^{2}\left(t\right)}{\sigma_{z}^{2}\left(t\right)},
\]

for $z=mb,$ and $ml$. With this definition, an $ARE\left(t\right)$
larger than one indicates that $F_{z}\left(t\right)$ is asymptotically
more efficient than $F_{n}\left(t\right)$ at point $t$.

\begin{landscape}

\begin{figure}
\centering\includegraphics[width=20cm,height=15cm]{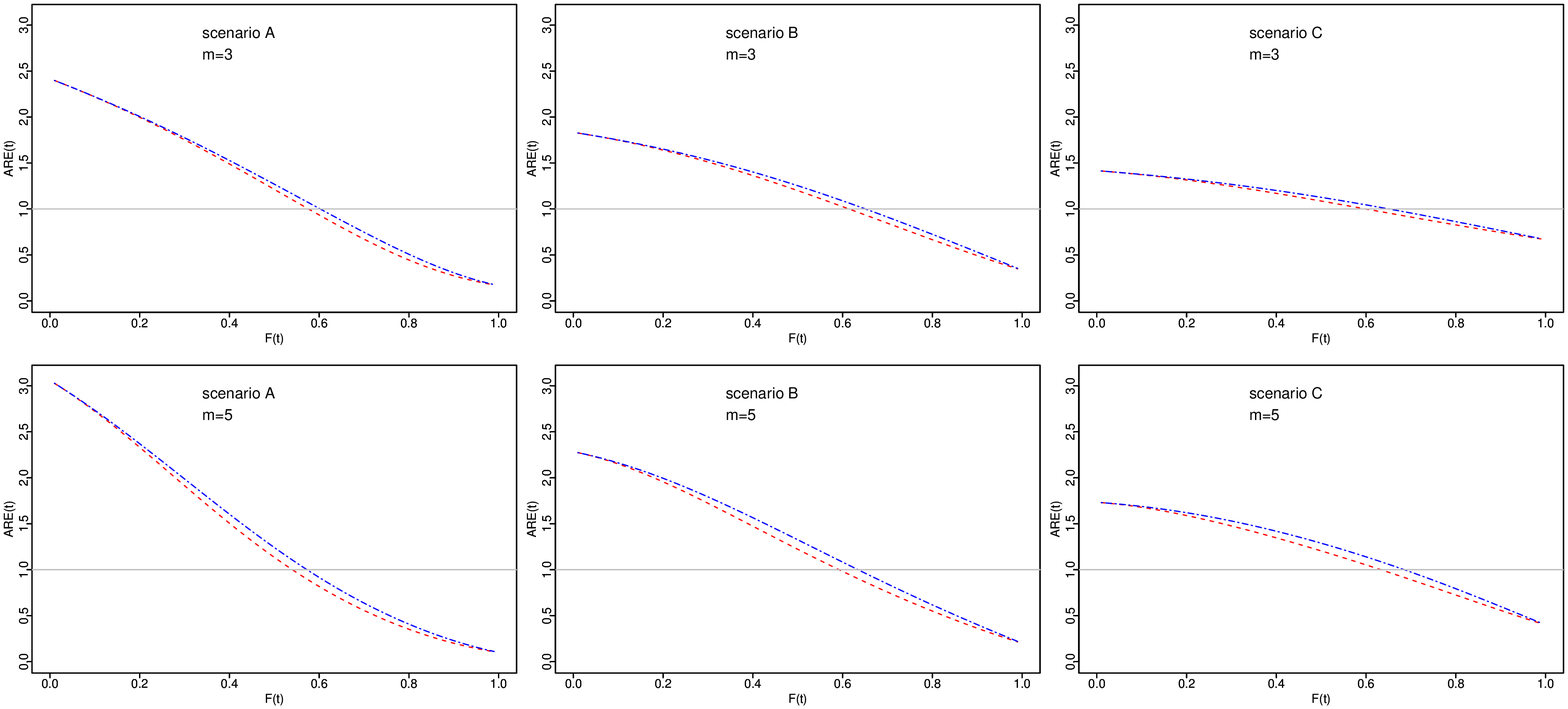}

\caption{The asymptotic relative efficiency (ARE) of $F_{mb}\left(t\right)$
with respect to $F_{n}\left(t\right)$ is represented by a red dashed
line, while that of $F_{ml}\left(t\right)$ with respect to $F_{n}\left(t\right)$
is represented by a blue two-dashed line for $m\in\left\{ 3,5\right\} $
under three different scenarios for the ties.}

\label{Fig1}
\end{figure}

\end{landscape}

Figure \ref{Fig1} shows that the $ARE$s as a function of $F\left(t\right)$
under assumption of perfect ranking. It is worth mentioning that under
this assumption, $ARE$s are independent of $F$. It can be observed
from Figure \ref{Fig1} that a sizable asymptotic efficiency gain
is obtained at the lower tail of the parent distribution as a result
of using $F_{mb}\left(t\right)$ or $F_{ml}\left(t\right)$ instead
of $F_{n}\left(t\right)$. As it is anticipated, $ARE\left(t\right)$
increases with the set size ($m$) at the lower tail of the parent
distribution. However, it decreases when we move from scenario $A$
to scenario $C$. A small asymptotic efficiency gain is observed when
$F_{ml}\left(t\right)$ is used instead of $F_{mb}\left(t\right)$
for all values of $t$.

\subsection{Finite sample size comparisons}

In this sub-section, the finite sample size performances of the CDF
estimators are evaluated in the MinPNS and SRS designs using the Monte
Carlo simulation. Let $t=Q_{p}$, for $p\in\left[0,1\right],$where
$Q_{p}$ is the $p$th quantile of the population distribution. Since
$F_{n}\left(t\right)$ is an unbiased estimator for the CDF of the
population $\left(F\left(t\right)\right)$, the relative efficiency
(RE) of $F_{mb}\left(t\right)$ and $F_{ml}\left(t\right)$ is defined
as the ratio of the variance of $F_{n}\left(t\right)$ to the mean
square error (MSE) of $F_{mb}\left(t\right)$ and $F_{ml}\left(t\right)$,
respectively, i.e.

\[
RE\left(p\right)=\frac{\mathbb{V}\left(F_{n}\left(Q_{p}\right)\right)}{MSE\left(F_{z}\left(Q_{p}\right)\right)},
\]

for $z=mb$ and $ml.$ With this definition, an $RE\left(p\right)>1$
means that $F_{z}\left(t\right)$ is more efficient than $F_{n}\left(t\right)$
at the point $t=Q_{p}$. By following lines of Proposition 3 in \citet{Wang2012},
one can simply show that, under perfect ranking assumption, $RE\left(p\right)$,
as a function of $p$, does not depend on the parent distribution.

The Monte Carlo simulation was performed to estimate $RE\left(p\right)$
for both perfect and imperfect ranking cases. The ranking process
was done using the linear ranking model introduced by \citet{Dell}.
In this model, it is assumed that the variable of interest is $Y$,
however, the ranking process is done using the perceived value of
$X$. The following relation exists between $X$ and $Y$ :

\[
X=\rho\left(\frac{Y-\mu_{y}}{\sigma_{y}}\right)+\sqrt{1-\rho^{2}}Z,
\]

where $\mu_{y}$, and $\sigma_{y}$ are the mean and standard deviation
of the random variable $Y$, respectively, the random variable $Z$
is independent from $Y$ and follows a standard normal distribution,
and parameter $\rho$ is the correlation coefficient between $X$
and $Y$ and controls the quality of ranking. Setting $\rho=1$ gives
perfect ranking cases, setting $\rho=0$ gives completely random ranking
cases, and choosing $\rho\in\left(0,1\right)$ provides a ranking
which is not perfect but is better than random. We set $n\in\left\{ 30,60,120,600\right\} $,
$m\in\left\{ 3,5\right\} $ and $\rho\in\left\{ 1,0.75\right\} $.
Vector $\mathbf{n}=\left(n_{1},\ldots,n_{m}\right)$ is obtained under
three different scenarios similar to those in the Table \ref{table2}
by multiplying the sample size $n$ by the vector $\mathbf{\lambda}=\left(\lambda_{1},\ldots,\lambda_{m}\right)$,
i.e. $\mathbf{n}=\left(\lambda_{1}n,\ldots,\lambda_{m}n\right)$.
For each combination of $\left(\mathbf{n},m,\rho\right),$ we drew
$100,000$ random samples from the MinPNS and SRS designs when the
parent distribution was standard uniform and standard exponential.
Finally, $RE\left(p\right)$ was estimated using $100,000$ random
samples for $p\in\left\{ 0.01,\ldots,0.99\right\} $.

Here, only the simulation results for the standard uniform distribution
are reported in Figures \ref{Fig2}-\ref{Fig5}. This is because it
seems that the type of parent distribution does not have much effect
on the pattern of $RE$s in the imperfect ranking case, as well.

\begin{landscape}

\begin{figure}
\centering\includegraphics[width=20cm,height=15cm]{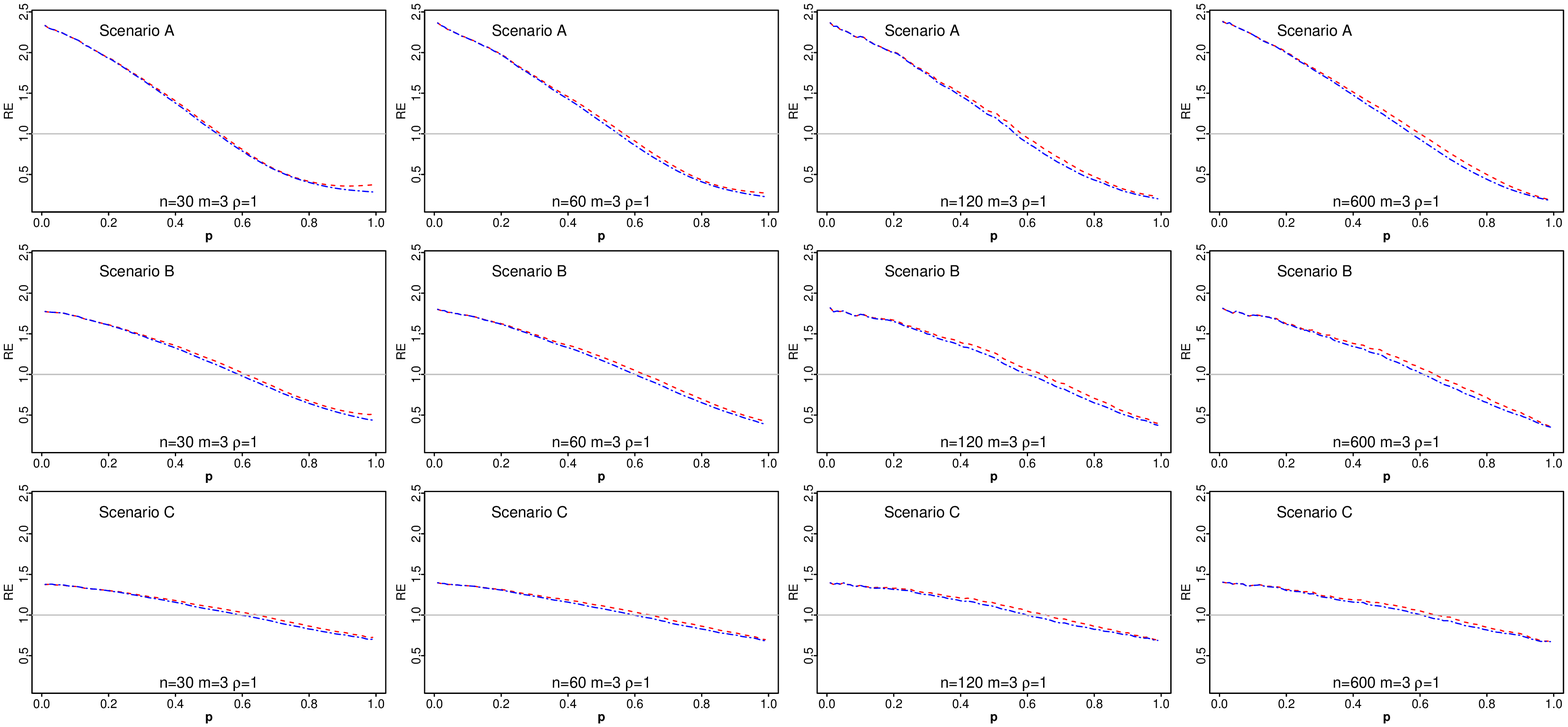}

\caption{The relative efficiency (RE) of $F_{mb}\left(t\right)$ with respect
to $F_{n}\left(t\right)$ represented by a red dashed line and the
RE of $F_{ml}\left(t\right)$ with respect to $F_{n}\left(t\right)$
represented by a blue two-dashed line for $n\in\left\{ 30,60,120,600\right\} $,
$m=3$, and $\rho=1$ under three different scenarios for the ties
when the parent distribution is standard uniform.}

\label{Fig2}
\end{figure}

\begin{figure}
\centering\includegraphics[width=20cm,height=15cm]{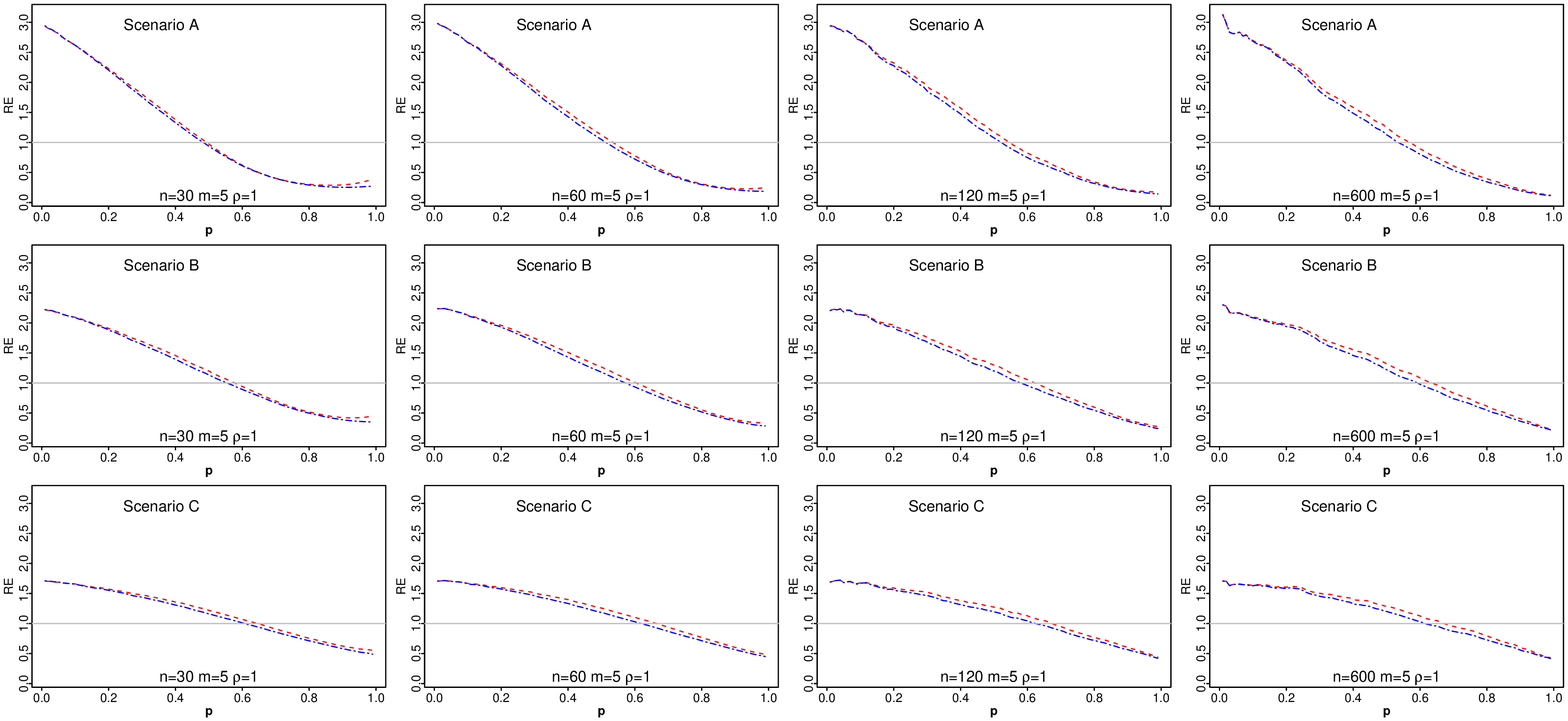}

\caption{The relative efficiency (RE) of $F_{mb}\left(t\right)$ with respect
to $F_{n}\left(t\right)$ represented by a red and $F_{ml}\left(t\right)$
with respect to $F_{n}\left(t\right)$ represented by a blue two-dashed
line for $n\in\left\{ 30,60,120,600\right\} $, $m=5$, and $\rho=1$
under three different scenarios for the ties when the parent distribution
is standard uniform.}

\label{Fig3}
\end{figure}

\begin{figure}
\centering\includegraphics[width=20cm,height=15cm]{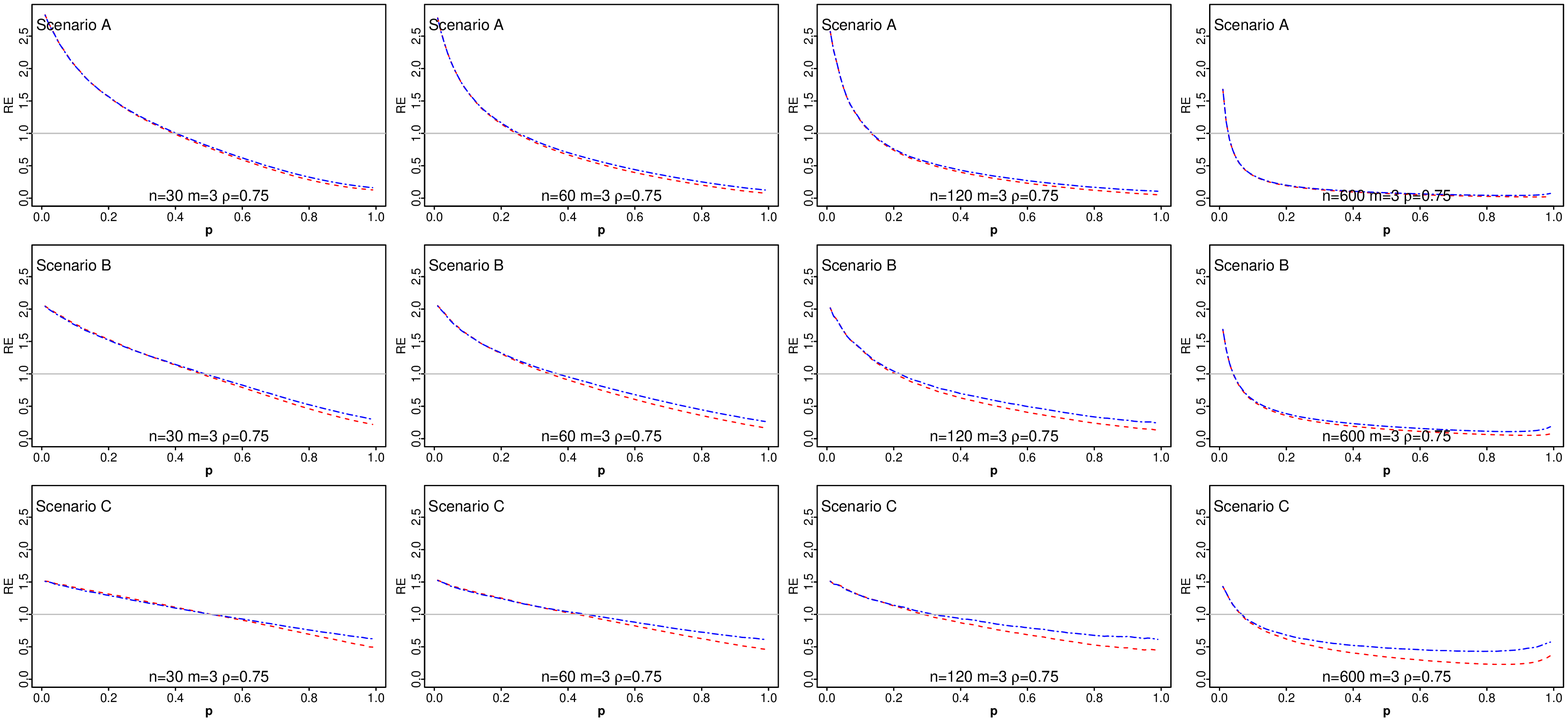}

\caption{The relative efficiency (RE) of $F_{mb}\left(t\right)$ with respect
to $F_{n}\left(t\right)$ represented by a red dashed line and $F_{ml}\left(t\right)$
with respect to $F_{n}\left(t\right)$ represented by a blue two-dashed
line for $n\in\left\{ 30,60,120,600\right\} $, $m=3$, and $\rho=0.75$
under three different scenarios for the ties when the parent distribution
is standard uniform.}

\label{Fig4}
\end{figure}

\begin{figure}
\centering\includegraphics[width=20cm,height=15cm]{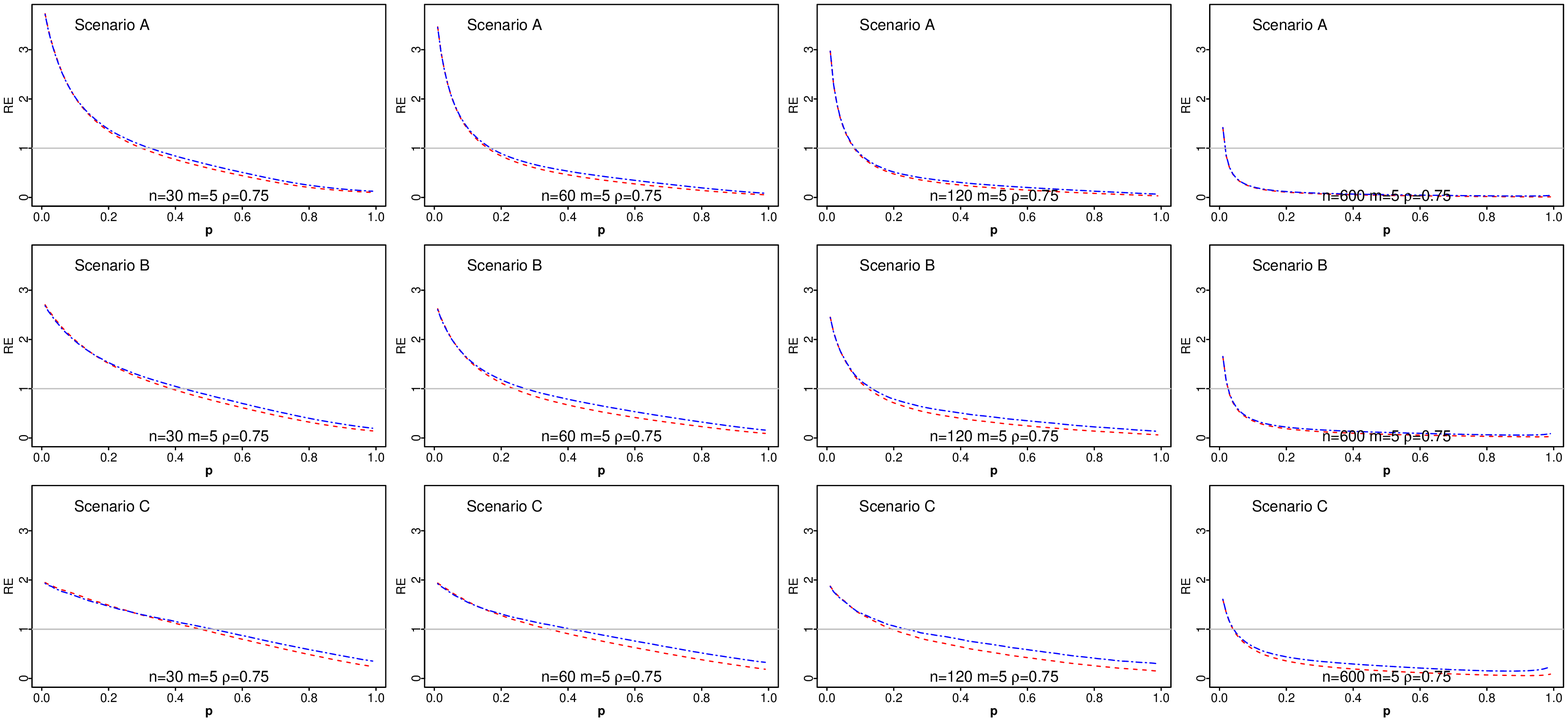}

\caption{The relative efficiency (RE) of $F_{mb}\left(t\right)$ with respect
to $F_{n}\left(t\right)$ represented by a red dashed line and $F_{ml}\left(t\right)$
with respect to $F_{n}\left(t\right)$ represented by a blue two-dashed
line for $n\in\left\{ 30,60,120,600\right\} $, $m=5$, and $\rho=0.75$
under three different scenarios for the ties when the parent distribution
is standard uniform.}

\label{Fig5}
\end{figure}

\end{landscape}

Figure \ref{Fig2} presents the simulation results for $n\in\left\{ 30,60,120,600\right\} $,
$m=3$, and the perfect ranking case ($\rho=1)$. It can be observed
from this figure that the $RE$ patterns of different estimators remain
almost the same when the sample size $n$ goes from $n=30$ to $n=600$
However, they decrease when we move from scenario $A$ to scenario
$C$ while the other parameters are kept fixed. The highest efficiency
gain is attained at the lower tail of the parent distribution under
scenario $A$ and the CDF estimators in MinPNS are around $250\%$
more efficient than their counterpart in SRS. As one intuitively expects,
the $RE$s fall below one after the median of the parent distribution.
This is not a big concern as the MinPNS scheme is designed to deal
with situations in which the researcher is interested in drawing a
statistical inference for the lower tail of the parent distribution.
Similar to what we have observed in Figure \ref{Fig1}, although the
CDF estimator based on the ML approach is more efficient than the
one based on the MB approach, the difference between their $RE$s
becomes indistinguishable in some cases, especially at the lower tail
of the parent distribution.

The simulation results for $n\in\left\{ 30,60,120,600\right\} $,
$m=5$, and $\rho=1$ are presented in Figure \ref{Fig3}. Comparing
the results of Figure \ref{Fig2} and Figure \ref{Fig3}, it can be
observed that the $RE$ patterns for $m=3$ are very similar to those
for $m=5$ with the clear difference that the $RE$ pattern are higher
for $m=5$ when they are larger than one. In Figure \ref{Fig3} ($m=5$),
the highest efficiency gain is obtained at the lower tail of the parent
distribution under scenario $A$. The CDF estimators in MinPNS are
around $300\%$ more efficient than their counterparts in SRS which
is about $50\%$ larger than the case for $m=3$ under scenario $A$
in Figure \ref{Fig2}.

The simulation results for the imperfect ranking case ($\rho=0.75$)
are depicted in Figures \ref{Fig4} and\ref{Fig5} for $m=3$ and
$m=5$, respectively. The performance patterns of the estimators in
these cases are almost the same as their counterparts in the perfect
ranking case ($\rho=1$) with the clear difference that the span of
the interval in which the $RE$s are larger than one becomes narrower
as we move from $\rho=1$ to $\rho=0.75$.

\section{Data Analysis using PNS\label{sec:Data-Analysis-using}}

Osteoporosis, which literally means porous bone, is a bone disorder
in which the density and quality of bone are reduced. As a consequence
of this disease, the bone becomes more porous and fragile and thus
becomes more likely to break. Bone tissues develop and strengthen
from the moment of birth until around the 20s when the bone is at
its densest. Bone density remains almost constant for around 10 years
later. In their late 30s, people start to lose their bone density
slowly as part of the ageing process.

Osteoporosis is a silent disease as it does not have any clear symptoms
and develops slowly and progressively. A person may not know that
he/she is suffering from it until a minor incidence such as a fall
leaves him/her with a fracture. In the absence of a low impact fracture
(a fracture occurring spontaneously or from a fall no greater than
standing height), osteoporosis is diagnosed based on a low BMD value
obtained using DXA technology. Specifically, a person is considered
to be suffering from osteoporosis if his/her BMD value obtained using
DXA technology is no larger than $0.56$ at either the femoral neck
or the lumbar spine.

Osteoporotic fractures impose a huge amount of social and economic
burden on the society. For instance, in Europe, the disability rate
due to osteoporosis is larger than that caused by cancers (with the
exception of lung cancer) and is at least comparable to that caused
by different types of chronic diseases such as asthma, rheumatoid
arthritis, and high blood pressure-related heart disease \citep{Johnel}.
Osteoporosis is a common disease and it is estimated that more than
8.9 million osteoporotic fractures happen annually worldwide resulting,
on average, in one osteoporotic fracture every 3 seconds \citep{Johnel}.
Since osteoporosis is an aging-associated disease, it is anticipated
that more and more osteoporotic fractures will observed in the future
as the global life expectancy has an increasing trend. For instance,
it is estimated that by 2035, the annual numbers and costs of osteoporosis-related
fractures will be doubled as compared to what was observed in 2010
\citep{Si}. Therefore, to minimize the impact of osteoporosis-related
fractures on the health of the growing population and the healthcare
budget, it is vital for both the government and public health officials
to monitor the prevalence of osteoporosis in the society annually.

PNS can be used in the BMD analysis for drawing statistical inference
about the lower tail of the underlying distribution as a more cost-efficient
technique than SRS. The DXA technology is expensive and not easily
accessible in some developing countries. However, a medic can simply
rank the patients based on the possibility of suffering from osteoporosis
according to his/her personal judgement or an inexpensive covariate.
The purpose of this section is to use a real dataset to show the efficiency
and applicability of the proposed procedure in a real situation where
the ranking procedure is performed using an ordinal concomitant variable
so that the ties happen naturally.

\subsection{Data description}

The World Health Organization (WHO) has recommended using 20-29 year-old
non-Hispanic white females from the dataset of the Third National
Health and Nutrition Examination Survey (NHANES III) conducted by
the National Center for Health Statistics (NCHS) and Centers for Disease
Control and Prevention to evaluate the nutritional status and health
of a representative sample of the non-institutionalized civilian US
population as the reference group for the calculation of the BMD T-score
at the femoral neck for both men and women. This national survey which
was conducted on 33994 American people from 1988 to 1994 is available
online at \href{\%20http://www.cdc.gov/nchs/nhanes/nh3data.htm}{ http://www.cdc.gov/nchs/nhanes/nh3data.htm}\footnote{Access date: Sep 2020}.
The publicty available dataset generated as a result of this survey
includes several measurements related to the nutritional status and
health of the people included in the survey. Therefore, it can be
used to show how PNS can be efficiently used to estimate the prevalence
of osteoporosis. Osteoporosis is more common among women than men.
Worldwide, 1 in 3 women aged 50 and over will experience at least
one osteoporotic fracture, as will 1 in 5 men aged 50 and over \citep{Melton1992,Melton1998,Kanis}.
Due to the importance of the prevalence of osteoporosis among elderly
women, we consider the BMD data of the women aged 50 and over in the
NHANES III survey as our ``\textit{hypothetical population}'' with
the size of $N=3978$. We will refer to it hereafter as the \textquoteleft BMD
dataset\textquoteright . The summary statistics of the population
are reported in Table \ref{table3}. It is observed that 667 out of
3978 BMD measurements are missing and the average BMD value for the
remaining people in the underlying population is $0.799$. Figure
\ref{Fig6} shows the histogram of the BMD data superimposed by the
normal density function. It is clear from this figure that the normal
distribution fits well to the BMD of women aged 50 and over.

\begin{table}
\centering %
\begin{tabular}{ccccccccc}
\hline 
$N$ & $\#$ NA & $Min$ & $Q_{1}$ & $Median$ & $Q_{3}$ & $Max$ & $Mean$ & $Variance$\tabularnewline
\hline 
3978 & 667 & 0.274 & 0.685 & 0.793 & 0.908 & 1.446 & 0.799 & 0.026\tabularnewline
\hline 
\end{tabular}\caption{\label{table3}The summary statistics for the BMD measurements of
women aged 50 and over}
\end{table}

\begin{figure}
\centering\includegraphics[width=15cm,height=20cm,keepaspectratio]{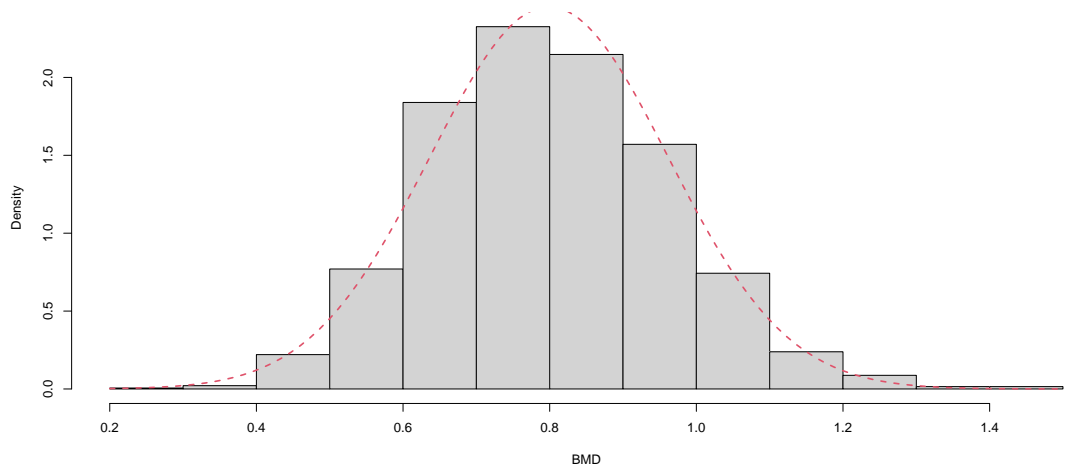}

\caption{The histogram of the bone mineral density (BMD) measurements of women
aged 50 and over along with the fitted normal density curve}

\label{Fig6}
\end{figure}

Two easily available concomitant variables for ranking are the body
mass index category (BMIC) and age decade (AD) variable. The BMIC
classifies the participants into four categories according to the
recommendation of the WHO as underweight (BMI$<18.5$), normal ($18.5\leq$BMI$<25$),
overweight ($25\leq$BMI$<30$), and obese (BMI$\geq30$). The Spearman
correlation coefficient between the BMD and BMIC variables is $0.475$.
The ordinal variable AD is obtained from the age of the participants
ranging from 50-59, 60-69, ...,90+. Moreover, the Spearman correlation
coefficient between BMD and AD variables is $-0.466.$ Since the correlation
coefficient between BMD and AD is negative, the ranking process is
done in reverse. Figure \ref{Fig7} shows the bar charts of the BMIC
and AD variables. It is clear from this figure that the frequencies
of some levels are higher than those of the others. Hence, when the
researcher wants to select the person with highest judgment probability
of suffering from osteoporosis using either BMIC or AD variables in
a set of small size, ranking ties will easily form.

\begin{figure}
\centering\includegraphics[width=15cm,height=20cm,keepaspectratio]{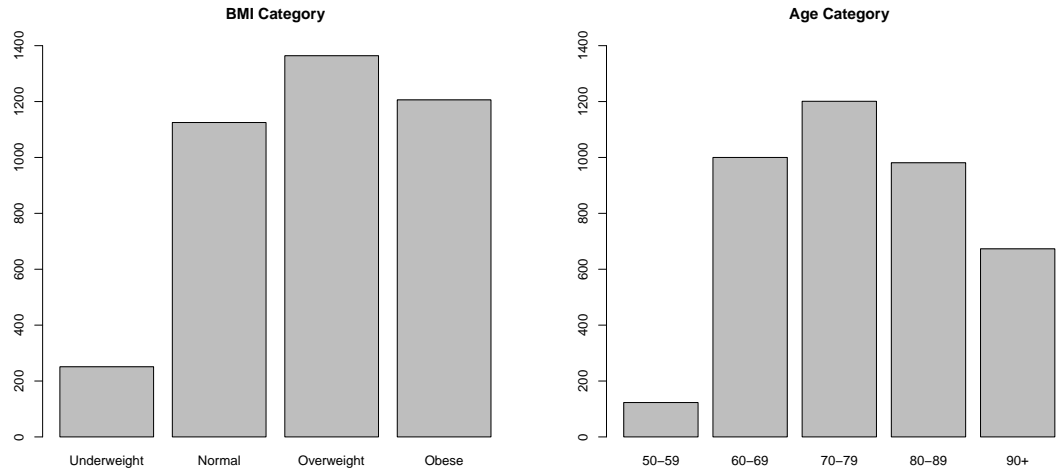}

\caption{The bar charts of the ranking variables in the BMD dataset: the BMI
category and age category}

\label{Fig7}
\end{figure}

\subsection{Empirical study}

Let the random variable $Y$ present the actual BMD value of a randomly
selected subject from the BMD dataset with the CDF $F$. The prevalence
of osteoporosis is this population is given by:

\[
F\left(0.56\right)=\mathbb{P}\left(Y\leq0.56\right)=\int_{0}^{0.56}f\left(y\right)dy=0.061,
\]

where $f\left(y\right)$ is the probability density function of the
random variable $Y$.

Here, the BMD dataset is used to compare the performances of different
CDF estimators in the MinPNS design in estimating the prevalence of
osteoporosis by employing the real ties obtained as a result of ranking
using the ordinal concomitant variables BMIC and AC. To do so, we
set $n\in\left\{ 10,20,30,40,50,100\right\} $ and $m\in\left\{ 3,4,5\right\} $.
For each combination of $\left(n,m\right)$, we draw $100,000$ random
samples from the SRS and MinPNS designs in which all the samplings
are done with replacement. To draw a MinPNS sample, we use BMIC (AC)
as the ranking variable. In each set with the size of $m,$ a person
with the lowest (highest) BMIC (AC) value is selected for actual quantification.
If there are two or more people wth the lowest (highest) BMIC (AC)
value, one of them is selected at random and the tie information is
recorded.

Tables \ref{table4} and \ref{table5} present the estimated bias,
variance, and relative efficiency (RE) of different CDF estimators
in MinPNS at point $t=0.56$ for estimating the prevalence of osteoporosis
when the ranking variable is BMIC or AC, respectively.

\begin{table}
\centering %
\begin{tabular}{cccccccccc}
\hline 
Sample Size & Set size &  & \multicolumn{3}{c}{MB estimator} &  & \multicolumn{3}{c}{ML estimator}\tabularnewline
\hline 
$n$ & $m$ &  & Bias & Variance & RE &  & Bias & Variance & RE\tabularnewline
 &  &  &  &  &  &  &  &  & \tabularnewline
 & 3 &  & -0.0101 & 0.0022 & 2.4748 &  & -0.0102 & 0.0022 & 2.4766\tabularnewline
10 & 4 &  & -0.0140 & 0.0016 & 3.0624 &  & -0.0141 & 0.0016 & 3.0732\tabularnewline
 & 5 &  & -0.0173 & 0.0013 & 3.5142 &  & -0.0174 & 0.0013 & 3.5317\tabularnewline
 &  &  &  &  &  &  &  &  & \tabularnewline
 & 3 &  & -0.0109 & 0.0010 & 2.4376 &  & -0.0109 & 0.0010 & 2.4395\tabularnewline
20 & 4 &  & -0.0150 & 0.0007 & 2.7970 &  & -0.0151 & 0.0007 & 2.8027\tabularnewline
 & 5 &  & -0.0182 & 0.0006 & 2.9775 &  & -0.0183 & 0.0006 & 2.9817\tabularnewline
 &  &  &  &  &  &  &  &  & \tabularnewline
 & 3 &  & -0.0112 & 0.0006 & 2.3220 &  & -0.0112 & 0.0006 & 2.3234\tabularnewline
30 & 4 &  & -0.0154 & 0.0005 & 2.5323 &  & -0.0154 & 0.0005 & 2.5343\tabularnewline
 & 5 &  & -0.0186 & 0.0004 & 2.5196 &  & -0.0187 & 0.0004 & 2.5198\tabularnewline
 &  &  &  &  &  &  &  &  & \tabularnewline
 & 3 &  & -0.0112 & 0.0005 & 2.2041 &  & -0.0113 & 0.0005 & 2.2054\tabularnewline
40 & 4 &  & -0.0157 & 0.0003 & 2.2563 &  & -0.0158 & 0.0003 & 2.2567\tabularnewline
 & 5 &  & -0.0187 & 0.0003 & 2.1816 &  & -0.0188 & 0.0003 & 2.1785\tabularnewline
 &  &  &  &  &  &  &  &  & \tabularnewline
 & 3 &  & -0.0113 & 0.0004 & 2.0982 &  & -0.0114 & 0.0004 & 2.0995\tabularnewline
50 & 4 &  & -0.0156 & 0.0003 & 2.0800 &  & -0.0157 & 0.0003 & 2.0787\tabularnewline
 & 5 &  & -0.0187 & 0.0002 & 1.9210 &  & -0.0188 & 0.0002 & 1.9158\tabularnewline
 &  &  &  &  &  &  &  &  & \tabularnewline
 & 3 &  & -0.0117 & 0.0002 & 1.6809 &  & -0.0117 & 0.0002 & 1.6801\tabularnewline
100 & 4 &  & -0.0157 & 0.0001 & 1.4268 &  & -0.0158 & 0.0001 & 1.4228\tabularnewline
 & 5 &  & -0.0190 & 0.0001 & 1.1868 &  & -0.0191 & 0.0001 & 1.1807\tabularnewline
\hline 
\end{tabular}\caption{\label{table4}{\small{}The empirical study using the BMD dataset:
comparing the performances of different estimators in estimating the
prevalence of osteoporosis when ranking is done using the BMI category
($\rho=0.478$)}}
\end{table}

\begin{table}
\centering %
\begin{tabular}{cccccccccc}
\hline 
Sample Size & Set size &  & \multicolumn{3}{c}{MB estimator} &  & \multicolumn{3}{c}{ML estimator}\tabularnewline
\hline 
$n$ & $m$ &  & Bias & Variance & RE &  & Bias & Variance & RE\tabularnewline
 &  &  &  &  &  &  &  &  & \tabularnewline
 & 3 &  & -0.0096 & 0.0021 & 2.5487 &  & -0.0096 & 0.0021 & 2.5488\tabularnewline
10 & 4 &  & -0.0135 & 0.0016 & 3.1815 &  & -0.0136 & 0.0016 & 3.1926\tabularnewline
 & 5 &  & -0.0169 & 0.0012 & 3.6779 &  & -0.0171 & 0.0012 & 3.6921\tabularnewline
 &  &  &  &  &  &  &  &  & \tabularnewline
 & 3 &  & -0.0107 & 0.0010 & 2.5056 &  & -0.0107 & 0.0010 & 2.5062\tabularnewline
20 & 4 &  & -0.0148 & 0.0007 & 2.9318 &  & -0.0148 & 0.0007 & 2.9368\tabularnewline
 & 5 &  & -0.0180 & 0.0005 & 3.1002 &  & -0.0181 & 0.0005 & 3.1042\tabularnewline
 &  &  &  &  &  &  &  &  & \tabularnewline
 & 3 &  & -0.0108 & 0.0006 & 2.4007 &  & -0.0108 & 0.0006 & 2.4010\tabularnewline
30 & 4 &  & -0.0149 & 0.0004 & 2.6463 &  & -0.0150 & 0.0004 & 2.6478\tabularnewline
 & 5 &  & -0.0182 & 0.0003 & 2.6324 &  & -0.0183 & 0.0003 & 2.6321\tabularnewline
 &  &  &  &  &  &  &  &  & \tabularnewline
 & 3 &  & -0.0109 & 0.0005 & 2.2797 &  & -0.0109 & 0.0005 & 2.2802\tabularnewline
40 & 4 &  & -0.0152 & 0.0003 & 2.3682 &  & -0.0152 & 0.0003 & 2.3683\tabularnewline
 & 5 &  & -0.0186 & 0.0002 & 2.2443 &  & -0.0187 & 0.0002 & 2.2403\tabularnewline
 &  &  &  &  &  &  &  &  & \tabularnewline
 & 3 &  & -0.0110 & 0.0004 & 2.1884 &  & -0.0110 & 0.0004 & 2.1892\tabularnewline
50 & 4 &  & -0.0153 & 0.0002 & 2.1522 &  & -0.0154 & 0.0002 & 2.1507\tabularnewline
 & 5 &  & -0.0186 & 0.0002 & 1.9824 &  & -0.0187 & 0.0002 & 1.9766\tabularnewline
 &  &  &  &  &  &  &  &  & \tabularnewline
 & 3 &  & -0.0114 & 0.0002 & 1.7353 &  & -0.0114 & 0.0002 & 1.7351\tabularnewline
100 & 4 &  & -0.0155 & 0.0001 & 1.4815 &  & -0.0155 & 0.0001 & 1.4776\tabularnewline
 & 5 &  & -0.0187 & 0.0001 & 1.2279 &  & -0.0188 & 0.0001 & 1.2213\tabularnewline
\hline 
\end{tabular}\caption{\label{table5}{\small{}The empirical study using the BMD dataset:
comparing the performances of different estimators in estimating the
prevalence of osteoporosis when ranking is done using age category
($\rho=0.469$)}}
\end{table}

Table \ref{table4} presents the results when the ranking is done
using the BMIC. It can be observed from this table that the efficiency
gain using PNS estimators can be as large as $350\%$ in some certain
circumstances and never falls below one. For $n\leq30$, the REs of
both ML and MB estimators are increasing functions in $m$. For $n=40,$
the REs increase when we move from $m=3$ to $m=4$ and decrease when
we move from $m=4$ to $m=5$. For $n\geq50$, the REs are decreasing
functions in $m$. It should be noted that both estimators slightly
underestimate the true value of the prevalence of osteoporosis. It
is worth mentioning that both ML and MB estimators have competitive
performance in all the considered cases which is consistent with what
was observed in Section \ref{sec:Comparisons}.

Table \ref{table5} presents the results when AC is used as the ranking
variable. Although the REs are slightly higher in this table, their
patterns are very similar to those in Table \ref{table4}.

\subsection{Revisiting an earlier example}

Assume that the sample units in Table \ref{table1} are obtained from
the BMD dataset and we want to compare the estimators using the example
given in Section 2. Based on the data and the tie information in Table
\ref{table1}, we produce $F_{ml}\left(0.56\right)=0.0357$ and $F_{mb}\left(0.56\right)=0.0352$.
It can be observed that the ML estimator produces a slightly closer
value to the true value of the parameter of interest $\left(F\left(0.56\right)=0.061\right)$.

\section{Concluding remarks\label{sec:Concluding-Remarks}}

In this paper, a new cost-efficient sampling scheme was developed
for drawing statistical inference about either the lower or the upper
tail of the population distribution. The proposed procedure can be
applied in the situations in which measuring the variable of interest
is time-consuming, costly, and/or destructive. However, a small number
of the sampling units (set) can be ranked without taking the actual
measurements on them. In principle, the proposed procedure is similar
to nomination sampling (NS) with a clear modification which is made
to increase the applicability of this design in practice. In fact,
NS cannot be performed unless the researcher is able to determine
with a high confidence the sample unit with the lowest/highest rank
in the set . We proposed to modify NS in a way that the researcher
is allowed to declare as many ties as needed whenever he/she cannot
find with a high confidence the sample unit with the lowest/highest
rank in the set . In partial nomination sampling (PNS), the researcher
divides the sample units in the set into two subsets in a way that
the sample units in the first subset are all smaller than those in
the second one. However, the subset units in the second subset need
not be ranked. Finally, the researcher selects one unit at random
from the first (second) subset to obtain the Min(Max)PNS.

Then, two the cumulative distribution function (CDF) estimators were
developed based on the maximum likelihood (ML) and moment-based (MB)
approaches and the asymptotic normality of each of them was proved.
It was shown that under the perfect ranking assumption, the ML estimator
of the CDF was asymptotically more efficient than the MB estimator,
although the efficiency gain was rather small. For various choices
of sample size, set size, parent distribution, ties generation mechanism,
and quality of ranking, we compared the proposed estimators in PNS
with their counterparts in SRS using the Monte Carlo simulation. The
simulation results indicated that the developed estimators had very
competitive performances and were significantly more efficient than
their counterparts in simple random sampling (SRS) at either the lower
or the upper tail of the population distribution.

Finally, the proposed procedure was applied to a BMD dataset from
the Third National Health and Nutrition Examination Survey (NHANES
III) to estimate the prevalence of osteoporosis in adult women aged
50 and over. The ranking was done using the two inexpensive and easily
available ranking variables of body mass index and age category where
the ties naturally happened in the ranking process. It was observed
that MinPNS substantially improved the efficiency of the estimation
of osteoporosis. Therefore, it considerably reduced the time and cost
of the study. The findings of this paper encourage using the PNS design
to incorporate partial rank information into the estimation process. 

This work was the first but important attempt in developing statistical
inference about either the lower or the upper tail of the population
distribution using PNS. Therefore, it remains an ample space for future
research. For instance, the proposed methodology in this paper can
be used for estimating other population attributes rather than the
CDF. Let $\theta=g\left(F\right)$ be the parameter of interest where
$g\left(.\right)$ is an arbitrary function. This parameter can be
estimated in PNS by replacing the CDF of $F$ with the proposed estimators
in PNS. With the efficiency gain of the CDF estimators in PNS, it
is intuitively expected that the resulting estimators of $\theta$
will perform well in some certain circumstances. Moreover, note that
the asymptotic normality of the proposed methodology in this paper
requires the perfect ranking assumption which may not hold in some
practical situations. This, combined with the fact that \textit{a
large enough} sample size is needed to use the normal theory of the
estimators which may not be available in cost-efficient sampling designs
such as PNS, reminds us that using some alternative methods (such
as bootstrap techniques) is vital. These topics can be addressed in
subsequent works.

\section*{Data availability statement}

The data that support the findings of this study are obtained from
the third National Health and Nutrition Examination Survey (NHANES
III) and is available online at

\noindent\href{\%20http://www.cdc.gov/nchs/nhanes/nh3data.htm}{ http://www.cdc.gov/nchs/nhanes/nh3data.htm}.


\begin{thebibliography}{Jafari Jozani and Mirkamali(2010)}
\bibitem[Al-Omari and Haq(2011)]{Al-omari}Al-Omari, A. I., and Haq,
A., 2011, Improved quality control charts for monitoring the process
mean, using double-ranked set sampling methods. Journal of Applied
Statistics, 39 (4), 745-763.

\bibitem[Ahn et al.(2017)]{Ahn}Ahn, S., Wang, X., and Lim, J. 2017.
On unbalanced group sizes in cluster randomized designs using balanced
ranked set sampling, Statistics and Probability Letters, 123, 210-217.

\bibitem[Chen et al.(2005)]{Chen2005} Chen, H., Stasny, E. A., Wolfe,
D.A., 2005. Ranked set sampling for efficient estimation of a population
proportion. Statistics in Medicine 24, 3319-3329.

\bibitem[Chen et al.(2018)]{Chen2018}Chen, W., Tian, Y., Xie, M.
2018. The global minimum variance unbiased estimator of the parameter
for a truncated parameter family under the optimal ranked set sampling,
Journal of Statistical Computation and Simulation, 8 (17), 3399-3414

\bibitem[Chen et al.(2019)]{Chen2019}Chen, W., Yang, R., Yao, D.,
Long, C. 2019. Pareto parameters estimation using moving extremes
ranked set sampling, To appear in Statistical Papers.

\bibitem[Boyles and Samaniego(1986)]{Boyles}Boyles, R. A., and Samaniego,
F. J. 1986. Estimating a distribution function based on nomination
sampling, Journal of the American Statistical Association, 81 (396),
1039-1045.

\bibitem[Dell and Clutter(1972)]{Dell} Dell, T.R., and Clutter, J.L.,
1972. Ranked set sampling theory with order statistics background.
Biometrics 28, 2, 545-555.

\bibitem[Duembgen and Zamanzade(2020)]{Lutz} Duembgen, L., and Zamanzade,
2020. Inference on a distribution function from ranked set samples.
Annals of the institute of Statistical Mathematics, 72, 157\textendash 185.

\bibitem[Frey(2012)]{Frey2012} Frey, J. 2012. Nonparametric mean
estimation using partially ordered sets. Environmental and Ecological
Statistics, 19 (3), 309-326.

\bibitem[Frey and Wang(2013)]{Frey=000026Wang}Frey, J., and Wang,
L. 2013. Most powerful rank tests for perfect rankings, Computational
Statistics and Data Analysis, 60, 157-168.

\bibitem[Frey and Wang(2014)]{Frey-EDF}Frey, J., and Wang, L. 2014.
EDF-based goodness-of-fit tests for ranked-set sampling, Canadian
Journal of Statistics, 42 (3), 451-469.

\bibitem[Frey and Feeman(2016)]{FreyMean} Frey, J., and Feeman, T.G.
2016. Efficiency comparisons for partially rank-ordered set sampling.
Statistical Papers, 58 (4), 1149-1166.

\bibitem[Frey and Zhang(2017)]{Frey.perfect1}Frey, J., and Zhang,
Y. 2017. Testing perfect rankings in ranked-set sampling with binary
data. Canadian Journal of Statistics, 45 (3), 326-339.

\bibitem[Frey et al.(2007)]{Frey2007}Frey, J., Ozturk , O., and Deshpande,
J. V. 2007. Nonparametric tests for perfect judgment ranking, Journal
of the American Statistical Association, 102 (478), 708-717.

\bibitem[Frey and Zhang(2019)]{Frey2019}Frey, J. and Zhang, Y. 2019.
Improved exact confidence intervals for a proportion using ranked
set sampling, Journal of the Korean Statistical Society, 48 (3), 493-501.

\bibitem[Frey and Zhang(2021)]{Frey2021}Frey, J. and Zhang, Y. 2021.
Robust confidence intervals for a proportion using ranked-set sampling,
To appear in Journal of the Korean Statistical Society .

\bibitem[He et al.(2020)]{He2020}He, X., Chen, W., and Qian, W. 2020.
Maximum likelihood estimators of the parameters of the log-logistic
distribution, Statistical Papers, 61, 1875-1892.

\bibitem[He et al.(2021)]{He2021}He, X., Chen, W., and Rui, Y. 2021.
Modified best linear unbiased estimator of the shape parameter of
log-logistic distribution, Journal of Statistical Computation and
Simulation, 91 (2), 383-395.

\bibitem[Hjort and Pollard(2011)]{Hjort} Hjort, N.L., and Pollard,
D. 2011. Asymptotics for minimisers of convex processes. arXiv:1107.3806v1
{[}math.ST{]}

\bibitem[Johnell and Kanis(2006)]{Johnel} Johnell, O., and Kanis,
J. A. 2006. An estimate of the worldwide prevalence and disability
associated with osteoporotic fractures. Osteoporos Int 17:1726.

\bibitem[Jafari Jozani and Mirkamali(2010)]{Jozani-Mirkamali1}Jafari
Jozani, M. and Mirkamali, S. J. 2010. Improved attribute acceptance
sampling plans based on maxima nomination sampling. Journal of Statistical
Planning and Inference. 140, 2448-2460.

\bibitem[Jafari Jozani and Mirkamali(2011)]{Jozani-Mirkamali2}Jafari
Jozani M, Mirkamali S. J. 2011. Control charts for attributes with
maxima nominated samples. Journal of Statistical Planning and Inference;141:2386-2398.

\bibitem[Jafari Jozani et al.(2018)]{Jozani-NSreg}Jafari Jozani,
M., Ayilara, O. F., and Leslie, W. B. 2018. Quantile regression with
nominated samples: An application to a bone mineral density study
. Statistics in Medicine, 37 (14), 2267-2283.

\bibitem[Haq et al.(2013)]{Haq} Haq, A., Brown, J., Moltchanova,
E., and Al-Omari, A. I. 2013. Partial ranked set sampling design,
Environmetrics, 24 (3), 201-207.

\bibitem[Haq and Al-Omari(2014)]{Haq-almori1}Haq, A., and Al-Omari,
A. I., 2014, A new Shewhart control chart for monitoring process mean
based on partially ordered judgment subset sampling, Quality and Quantity,
49 (3), 1185-1202.

\bibitem[Haq et al.(2014)]{Haqetal2}Haq, A., Brown, J., Moltchanova,
E., and Al-Omari, A. 2014. Effect of measurement error on exponentially
weighted moving average control charts under ranked set sampling schemes,
Journal of Statistical Computation and Simulation, 85 (6), 1224-1246.

\bibitem[Kanis et al.(2000)]{Kanis} Kanis J A, Johnell O, Oden A,
et al. 2000. Long-term risk of osteoporotic fracture in Malmo. Osteoporos
Int 11:669.

\bibitem[Melton et al. (1998)]{Melton1998}Melton L J, Atkinson E
J, O'Connor MK, et al. 1998. Bone density and fracture risk in men.
J Bone Miner Res 13:1915.

\bibitem[Melton et al. (1992)]{Melton1992}Melton L J, Chrischilles
E A, Cooper C, et al. 1992. Perspective. How many women have osteoporosis?
J Bone Miner Res 7:1005.

\bibitem[McIntyre(1952)]{McIntyre} McIntyre, G.A., 1952. A method
for unbiased selective sampling using ranked set sampling. Austral.
J. Agricultural Res. 3, 385-390.

\bibitem[Mu(2015)]{Mu} Mu, X., 2015. Log-concavity of a mixture of
beta distributions, Statistics and Probability Letters, 99, 125-130

\bibitem[Nourmohammadi et al.(2014)]{Nourmohammadi2014}Nourmohammadi,
M., Jafari Jozani, M. and Johnson, B. 2014. Confidence interval for
quantiles in finite populations with randomized nomination sampling.
Computational Statistics and Data Analysis. 73, 112\textendash 128.

\bibitem[Omidvar et al.(2018)]{Omidvar}Omidvar, S., Jafari Jozani,
M. and Nematollahi, N. 2018. Judgment post-stratification in finite
mixture modeling: An example in estimating the prevalence of osteoporosis,
Statistics in Medicine, 37 (30), 4823-4836

\bibitem[Ozturk(2011)]{Ozturk2011} Ozturk, O. 2011. Sampling from
partially rank-ordered sets. Environmental and Ecological Statistics,
18, 757-779.

\bibitem[Qian et al.(2021)]{Qian2021}Qian, W., Chen, W., He, X. 2021.
Parameter estimation for the Pareto distribution based on ranked set
sampling. Statistical Papers, 62, 395-417.

\bibitem[Samawi and Al-Sagheer(2001)]{Samawi-CDF-2001}Samawi, H.
M., and Al-Sagheer, O. A. 2001. On the estimation of the distribution
function using extreme and median ranked set samples. Biometrical
Journal, 43 (3), 357-373.

\bibitem[Samawi and Al-Saleh(2013)]{Samawi-odds-2013}Samawi, H. M.,
and Al-Saleh, M.F. 2013. Valid estimation of odds ratio using two
types of moving extreme ranked set sampling. Journal of the Korean
Statistical Society 42, 17-24.

\bibitem[Samawi et al.(2017)a]{Samawi-logi-2017}Samawi, H. M., Rochani,
H., Linder, D., and Chatterjee, A. 2017. More efficient logistic analysis
using moving extreme ranked set sampling. Journal of Applied Statistics,
44 (4), 753-76.

\bibitem[Samawi et al.(2017)b]{Samawi-StatMed-2017}Samawi, H. M.,
Yin, J., Rochani, H., Panchal, V. 2017. Notes on the overlap measure
as an alternative to the Youden index: How are they related? Statistics
in Medicine, 36 (26), 4230-4240.

\bibitem[Samawi et al.(2018)]{Samawi-sample-2018}Samawi, H. M., Helu,
A., Rochani, H., Yin, J., Yu, L., and Vogel, R. 2018. Reducing sample
size needed for accelerated failure time model using more efficient
sampling methods. Statistical Theory and Practice, 12 (3), 530-541.

\bibitem[Si et al. (2015)]{Si}Si, L., Winzenberg, T. M., Jiang, Q.,
Chen, M., Palmer, A. J. 2015.Projection of osteoporosis-related fractures
and costs in China: 2010\textendash 2050. Osteoporos Int 26, 1929-1937.

\bibitem[Stokes and Sager(1988)]{Stokes}Stokes, S. L., and Sager,
T. W. Characterization of a ranked-set sample with application to
estimating distribution functions. Journal of the American Statistical
Association, 83 (402), 374-381.

\bibitem[Takahasi and Wakitomo(1968)]{Takahasi}Takahasi, K., and
Wakimoto, K. 1968. On unbiased estimates of the population mean based
on the sample stratified by means of ordering. Annals of the Institute
of Statistical Mathematics, 20(1), 1-31

\bibitem[Wang et al.(2012)]{Wang2012} Wang, X., Wang, K., and Lim,
J. 2012. Isotonized CDF estimation from judgment poststratification
data with empty strata. Biometrics, 61 (1), 194-202.

\bibitem[Wang et al.(2016)]{Wang2016}Wang, X., Lim, J., and Stokes,
S. L. 2016. Using ranked set sampling with cluster randomized designs
for improved inference on treatment effects. Journal of the American
Statistical Association, 111 (516), 1576-1590.

\bibitem[Wang et al.(2017)]{Wang2017}Wang, X., Ahn, S., and Lim,
J. 2017. Unbalanced ranked set sampling in cluster randomized studies.
Journal of Statistical Planning and Inference, 187, 1-16.

\bibitem[Wang et al.(2020)]{Wang2020}Wang, X., Wang, M., Lim, J.,
and Ahn, S. 2020. Using ranked set sampling with binary outcomes in
cluster randomized designs. Canadian Journal of Statistics, 48 (3),
342-365.

\bibitem[Willemain(1980)a]{Willemain-a}Willemain, T. 1980a. A comparison
of patient-centered and case-mix reimbursement for nursing home care,
Health Service Research, 15 (4), 365-377.

\bibitem[Willemain(1980)b]{Willemain-b}Willemain T. 1980. Estimating
the population median by nomination sampling. Journal of American
Statistical Association.75 (372):908-911.

\bibitem[Zamanzade and Mahdizadeh(2017)]{Zamanzade2017} Zamanzade,
E. and Mahdizadeh, M. 2017. A more efficient proportion estimator
in ranked set sampling. Statistics and Probability Letters, 129, 28-33.

\bibitem[Zamanzade and Mahdizadeh(2020)]{Zamanzade=000026Mahi2020}
Zamanzade, E., and Mahdizadeh, M. 2020. Using ranked set sampling
with extreme ranks in estimating the population proportion, Statistical
Methods in Medical Research, 29 (1), 165-177.
\end{thebibliography}
\end{document}